\theoremstyle{theorem}
\newtheorem{lemma}{Lemma}
\newtheorem{corollary}{Corollary}
\newtheorem{proposition}{Proposition}
\theoremstyle{definition}
\newtheorem{example}{Example}
\newtheorem{remark}{Remark}
\newcommand{\Keywords}[1]{\par\noindent{\small{Keywords\/}: #1}}
\newcommand{\AMS}[1]{\par\noindent{\small{AMS Classification\/}: #1}}
\title{Scheduling arc shut downs in a network to maximize flow over time with a bounded number of
  jobs per time period\thanks{This research was supported by the ARC Linkage Grants no. LP0990739
    and LP1102000524 and HVCCC P/L.}}
\author{Natashia Boland \qquad Thomas Kalinowski \qquad Simranjit Kaur\\
\smallskip
 {\it University of Newcastle, Australia.}
\medskip
}
\begin{document}
\maketitle 

\begin{abstract}
We study the problem of scheduling maintenance on arcs of a capacitated network so as to maximize the total flow from a source node to a sink node over a set of time periods. Maintenance on an arc shuts down the arc for the duration of the period in which its maintenance is scheduled, making its capacity zero for that period. A set of arcs is designated to have maintenance during the planning period, which will require each to be shut down for exactly one time period. In general this problem is known to be NP-hard, and several special instance classes have been studied. Here we propose an additional constraint which limits the number of maintenance jobs per time period, and we study the impact of this on the complexity.

\medskip

\Keywords{network models, complexity theory, maintenance scheduling, mixed integer programming}
\AMS {90C10, 90B10, 68Q25}
\end{abstract}

\section*{Introduction}

We consider the problem of scheduling maintenance jobs on the arcs of a flow network with the
objective of maximizing the throughput over a given time horizon. This problem combines the diverse fields
of scheduling (see for instance~\cite{pinedo2008scheduling}) and network flow optimization, in
particular dynamic network flows, which have been the subject of intense study in recent years;
see, for example,~\cite{koch2011flowsovertime,kotnyek2003annotated,skutella2009introduction}. 

The combination of scheduling and network optimization represents a natural extension to
existing network models, and admits many interesting variants. For example, Tawamalarmi and
Li~\cite{Tawa}, motivated by a problem in highway maintenance, consider a multicommodity flow variant, 
providing complexity results, combinatorial algorithms, and integer programming models. Network optimization
problems and scheduling have also been combined in the context of restoring infrastructure networks
after major disruptions~\cite{nurre2013thesis,nurre2014integrated,nurre2012restoring}) and in network
design over time~\cite{Baxter.etal_Incremental_2014,Kalinowski.etal_Incremental_2015}.

The optimization problem studied in the present paper was originally motivated by annual
maintenance planning for a coal export supply chain~\cite{boland2011optimizing_hvcc}, in which
maximizing the annual throughput is a key concern (see~\cite{liden2014survey} for a comprehensive survey of
mathematical models in railway maintenance scheduling). Boland~\emph{et
  al.}~\cite{boland2012mixed,boland2012scheduling} introduced a general network optimization problem
in which arc maintenance jobs need to be scheduled so as to maximize the total flow in the network
over time. A simplified version of the problem in which all jobs have unit processing time was
studied in~\cite{boland2013unit_time}, and the complexity was determined taking into account certain
instance characteristics, such as special network structures and restrictions on the set of jobs. 

In
the present paper we extend this model by adding the constraint that the number of jobs scheduled in
any time period is bounded by a number $K$ which is given as part of the input. The problem is
defined over a network $N=(V,A,s,t,u)$ with node set $V$, arc set $A$ (we admit parallel arcs having
the same start and end nodes), source $s\in V$, sink $t\in V$ and nonnegative integral capacity
vector $u=(u_a)_{a\in A}$. By $\delta^-(v)$ and $\delta^+(v)$ we denote the set of incoming and
outgoing arcs of node $v$, respectively. We consider this network over a set of $T$ time periods
indexed by the set $[T]:=\{1,2,\ldots,T\}$, and our objective is to maximize the total flow from $s$
to $t$. We are also given a subset $J\subseteq A$ of arcs that have to be shut down for exactly one
time period in the time horizon. In other words, there is a set of maintenance jobs, one for each
arc in $J$, each with unit processing time. In addition, there is a parameter $K$ such that the
number of maintenance jobs scheduled in any time period must not exceed $K$. 

From a practical point
of view, this is a natural variation of the model. In many real world network maintenance scheduling
problems, there are resource and budget constraints that do not allow too many jobs to be performed at the
same time. For example, the number of crews available to work at night may be limited, or the maintenance operation may require the use of specialized machines, of which very few are available. In the coal supply chain situation that motivated this research, some types of  rail maintenance require the use of such machines: the machines were shared across the whole state, with at most two available in the region at any one time. 
Of course, in practice there can be complicated rules about the combinations of jobs that
are allowed. Disregarding these complications, we propose to study a very simple version of the
model as an abstract combinatorial optimization problem. We also make the simplifying assumptions
that flow is instantaneous, i.e., there are no transit times associated with
the arcs, and that there is always enough flow available to exhaust the network capacity. These are both valid assumptions in the case of the coal supply chain application that motivated this work~\cite{boland2011cap_align-hvcc,boland2012mixed,boland2013capalign-hvcc}. For example, it can be shown that all transit times can be set to zero if all job start times are expressed in a standardized time, in which each job's start time is delayed by the travel time from its location to the port terminal.

The optimization problem is to choose the outage time periods in such a way that the total flow from $s$
to $t$ is maximized. We call this problem Maximum Flow Arc Shutdown Scheduling (\textsc{MFASS}), and
more formally, it can be written as a mixed binary program as follows:
\begin{alignat}{2}
\nonumber \text{maximize}\ z &= \sum_{i=1}^T\sum_{a\in\delta^+(s)}x_{ai}\qquad &&\text{subject to}\\ 
 x_{ai} &\leqslant u_ay_{ai} && a\in J,\ i\in[T], \label{eq:cap_job_arcs} \\
x_{ai} &\leqslant u_a && a\in A\setminus J,\ i\in[T], \label{eq:cap_nonjob_arcs} \\
\sum_{i=1}^Ty_{ai} &= T-1 && a\in J, \label{eq:scheduling} \\
\sum_{a\in\delta^-(v)}x_{ai} &= \sum_{a\in\delta^+(v)}x_{ai} &&  v \in V\setminus \{s,t\},\
i\in[T], \label{eq:flow_conservation}\\
\sum_{a\in J} y_{ai} &\geqslant \lvert J\rvert - K && \ i\in[T],  \label{eq:job_bound}\\
x_{ai} &\geqslant 0 && a\in A,\ i\in[T],  \label{eq:nonnegative_flows}\\
y_{ai} &\in\{0,1\} && a\in A,\ i\in[T], \label{eq:binary_indicators}
\end{alignat}
where $x_{ai}$ for $a \in A$ and $i \in [T]$ denotes the flow on arc $a$ in time period $i$, and
$y_{ai} \in \{0,1\}$ for $a \in A$ and $i \in [T]$ indicates when the arc $a$ is available in time
period $i$, i.e., $y_{ai}=0$ in the period $i$ in which the outage for arc $a$ is scheduled. The
problem is to schedule the maintenance jobs so that the total flow of the network over the time
horizon $T$ is maximized.

In the present work, our focus is not primarily on the real-world application in the background, but
on the abstract optimization problem MFASS and on the properties that make a class of instances
hard or easy. These instance classes may or may not correspond to properties that occur in the coal
supply chain application. For instance, the reduction from \textsc{3-Partition}
in~\cite{boland2012scheduling} shows that the general problem is strongly NP-complete for the class
of instances with $K=3$, and this raises the question about the hardness of the case $K=2$. Nevertheless, the original supply chain application did motivate some features studied. For example, the real-life network is series-parallel, (\cite{boland2012mixed}), some types of maintenance, (especially on the rail network), require the use of scarce equipment, motivating the study of small values of $K$, and the sum of arc capacities entering any node is equal, or nearly equal, to the sum of arc capacities leaving any node, for almost all network nodes (\cite{boland2013unit_time}). In~\cite{boland2013unit_time}
several instance classes for the problem without the job limit per time period were analyzed. 

In
order to classify instances we introduce the following notation. Let $\mathcal C$ be the class of
all \textsc{MFASS} instances. With an upper index $K$ we denote the class of all instances with an
upper bound of $K$ on the number of jobs scheduled per time period, and a lower index indicates
additional restrictions as introduced in~\cite{boland2013unit_time}.
\begin{itemize}
\item Let $\mathcal C_{\text{sp}}$ be the class of instances where the underlying network is series-parallel.
\item Let $\mathcal C_{\text{bal}}$ be the class of instances where the underlying network is
  \emph{balanced}, i.e., for each transshipment node $v\in V\setminus\{s,t\}$ the capacity into this
  node equals the capacity out of this node.
\item Let $\mathcal C_{\text{uc}}$ be the class of \emph{u}nit \emph{c}apacity instances, i.e., the capacities are $u_a=1$ for all arcs $a\in A$.
\item Let $\mathcal C_{\text{aa}}$ be the class of instances where \emph{a}ll \emph{a}rcs have a job associated, i.e., $J=A$.
\end{itemize}
For instance $\left(\mathcal C^3_{\text{sp}}\cap \mathcal C^3_{\text{aa}}\right)\setminus \mathcal
C^3_{\text{bal}}$ is the set of all instances with a series-parallel network which is not balanced,
a job associated with every arc, and the constraint that at most $3$ jobs can be scheduled per time
period. In general, $K$ is not constant, and we also consider instance classes with varying $K$, but
imposing some restrictions on how $K$ can vary relative to other instance parameters. For instance,
$\mathcal C^{\lvert J\rvert}_{\text{sp}}$ is the class of instances with a series-parallel network
and no limit on the number of jobs per time period, and $\mathcal C^{\lvert J\rvert/3}$ contains the
instances in which at most one third of all jobs can be scheduled per time period. As proved
in~\cite{boland2013unit_time}, the classes $\mathcal C^{\lvert J\rvert}_{\text{aa}}$ and $\mathcal
C^{\lvert J\rvert}_{\text{sp}}\cap\mathcal C^{\lvert J\rvert}_{\text{bal}}$ are trivial: it is
always optimal to schedule all jobs at the same time. In contrast, the restriction of the problem to
$\mathcal C^{\lvert J\rvert}_{\text{bal}}$ is still strongly NP-hard, and the restriction to
$\mathcal C^{\lvert J\rvert}_{\text{sp}}$ is NP-hard, but for fixed $T$ it can be solved in
pseudopolynomial time using dynamic programming. Our new complexity results are summarized in Table~\ref{tab:results}.
\begin{table}[htb]
\renewcommand{\arraystretch}{1.2}
  \centering
  \begin{tabular}{@{}ll@{}} \toprule
    Instance class & Complexity \\ \midrule
    $\mathcal C^3_{\text{sp}}\cap \mathcal C^3_{\text{bal}}\cap \mathcal C^3_{\text{aa}}$ & strongly NP-complete (Proposition~\ref{prop:strong_hardness})\\
$\mathcal C^{\lvert J\rvert-1}_{\text{sp}}\cap\mathcal C^{\lvert J\rvert-1}_{\text{bal}}\cap\mathcal C^{\lvert J\rvert-1}_{\text{aa}}$ & NP-complete (Proposition~\ref{prop:weak_hardness}) \\
$\mathcal C_{\text{uc}}$ & NP-complete (Proposition~\ref{prop:unit_cap})\\
$\mathcal C^2$ & $O(\lvert J\rvert^3)$ (Proposition~\ref{prop:poly_k_2}) \\ \bottomrule
  \end{tabular}
  \caption{Complexity results.}
  \label{tab:results}
\end{table}

Note that the classes $\mathcal C_{\text{sp}}$, $\mathcal C_{\text{bal}}$ and $\mathcal
C_{\text{uc}}$ are interesting from the coal chain point of view: the actual network underlying the
work in~\cite{boland2012mixed,boland2012scheduling} is series-parallel, almost balanced, and has the
property that a large proportion of the arcs has the same capacity.

Note that the problem is solvable in polynomial time if both $T$ and $K$ are bounded, say $T\leqslant T_0$ and $K\leqslant K_0$ for some absolute constants $T_0$ and $K_0$. Then $\lvert J\rvert\leqslant K_0T_0$ for any feasible instance, and we can enumerate all partitions of $J$ into at most $T$ sets of size at most $K$ of which there are at most
\[C=\prod_{i=0}^{T_0}\binom{K_0(T_0-i)}{K_0}.\]
For each of these partitions we have to solve $T$ maximum flow problems, hence the run-time is bounded by $CT_0nm$, since the maximum flow problem can be solved in $O(mn)$ time~\cite{king1994faster,orlin2013max}. Consequently, for the asymptotic analysis we are interested in instance classes where at least one of the parameters $T$ and $K$ is unbounded.

The paper is organized as follows. In Section~\ref{sec:K_2} we show that the case $K=2$ can be solved in polynomial time. In addition we provide an explicit description of an optimal solution for $K=2$ and a network with a single transshipment node which leads to a significantly better run-time bound for this case. The hardness results are proved in Section~\ref{sec:hardness}. In Section~\ref{sec:fptas} we present a fully polynomial time approximation scheme for series-parallel networks with fixed time horizon. We also provide a polynomial time approximation scheme for series parallel networks in general when $K = \lvert J\rvert$.

\section{The case $K=2$}\label{sec:K_2}
In this section we consider the case $K=2$. In Section~\ref{subsec:general} we show that this case can be reduced to a maximum weighted matching problem and thus is solvable in polynomial time, and in Section~\ref{subsec:single_node} we give an explicit description of an optimal solution for the case that the network has only a single transshipment node.

\subsection{General networks}\label{subsec:general}
We reduce the problem to a maximum weight perfect matching problem. Let $F_0$ denote the maximum flow value in the whole network, for $a\in J$ let $F_a$ denote the maximum flow when arc $a$ is shut, and for distinct $a,b\in J$ let $F_{ab}$ be the maximum flow when arcs $a$ and $b$ are shut. We set $p=\max\{0,\lvert J\rvert-T\}$ and define an auxiliary graph whose vertex set contains two vertices for every arc $a\in J$ and two sets $W$ and $W'$ of dummy vertices with $\lvert W\rvert=2p$ and $\lvert W'\rvert=2(\lfloor\lvert J\rvert/2\rfloor-p)$. The two vertices for $a\in J$ are denoted by $a$ and $a'$, and the weighted edge set of the auxiliary graph is defined as follows:
\begin{itemize}
\item For distinct arcs $a,b\in J$ there is an edge $\{a,b\}$ with weight $F_{ab}+F_0$.
\item For $a\in J$ there is an edge $\{a,a'\}$ of weight $F_a$.
\item There are all edges of the form $\{a',w\}$ for $a\in J$ and $w\in W\cup W'$. All these edges have zero weight.
\item The vertex set $W'$ induces a matching consisting of zero weight edges.
\end{itemize}
There is a correspondence between perfect matchings in the auxiliary graph and outage schedules. Let $M$ be a perfect matching in the auxiliary digraph. The corresponding schedule has
\begin{itemize}
\item for every edge $\{a,b\}\in M$ with $a,b\in J$ one time period with arcs $a$ and $b$ shut,
\item for every edge $\{a,a'\}\in M$ with $a\in J$ one time period with only arc $a$ shut,
\item all other time periods without shut arcs.
\end{itemize}
This construction is illustrated in Figure~\ref{fig:matching} for $J=\{a,b,\ldots,h\}$ and $T=6$. 
\begin{figure}[htb]
  \centering
  \includegraphics[width=.55\textwidth]{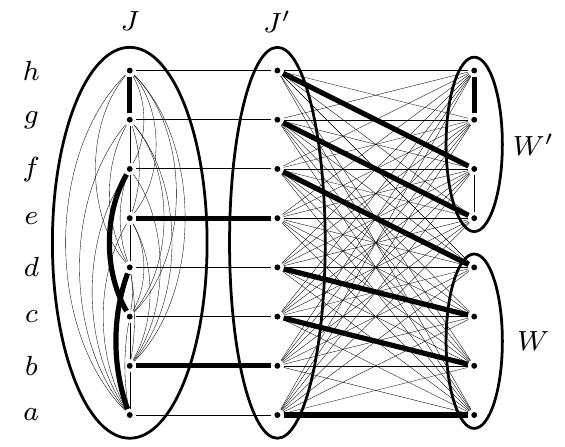}
  \caption{A perfect matching in the auxiliary graph.}
  \label{fig:matching}
\end{figure}
The bold edges form a perfect matching corresponding to scheduling the following outage of schedule: period 1: $\{a,d\}$, period 2: $\{c,f\}$,  period 3: $\{g,h\}$,  period 4: $\{b\}$,  period 5: $\{e\}$,  period 6: $\varnothing$.    

For a perfect matching $M$ we define subsets $M_1\subseteq M$ and $M_2\subseteq M$ by
\begin{align*}
M_1 &= \{\{a,b\}\in M\ :\ a,b\in J\},&
M_2 &= \{\{a,a'\}\in M\ :\ a\in J\}.
\end{align*}
Note that the $2p$ nodes in $W$ must be matched to nodes $a'$, hence
\[\lvert M_2\rvert\leqslant \lvert J\rvert-2p\leqslant\lvert J\rvert-2(\lvert J\rvert-T)=2T-\lvert J\rvert,\]
and with $\lvert M_1\rvert=\frac12(\lvert J\rvert-\lvert M_2\rvert)$ this implies
\[\lvert M_1\rvert+\lvert M_2\rvert=\frac12(\lvert J\rvert-\lvert M_2\rvert)+\lvert M_2\rvert=\frac12(\lvert J\rvert+\lvert M_2\rvert)\leqslant T.\]
The total throughput for the schedule corresponding to the matching $M$ is
\begin{multline*}
\sum_{\{a,b\}\in M_1}F_{ab}+\sum_{\{a,a'\}\in M_2}F_a+(T-\lvert M_1\rvert-\lvert M_2\rvert)F_0\\
=\sum_{\{a,b\}\in M_1}\left(F_{ab}+F_0\right)+\sum_{\{a,a'\}\in M_2}F_a+(T-2\lvert M_1\rvert-\lvert M_2\rvert)F_0 = \omega(M)+(T-\lvert J\rvert)F_0,
\end{multline*}
where $\omega(M)$ is the weight of $M$. Thus the original problem is equivalent to finding a maximum weighted perfect matching in the auxiliary graph, and with an efficient implementation~\cite{gabow1990data} of the blossom algorithm~\cite{edmonds1965paths} we have proved the following proposition.
\begin{proposition}\label{prop:poly_k_2}
For $K=2$ the problem \textsc{MFASS} can be solved in $O(\lvert J\rvert^3)$ time.
\end{proposition}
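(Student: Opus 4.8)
The reduction to maximum weight perfect matching is already in place, so the plan is to verify that it is both correct and well posed, and then to read the running time off the size of the auxiliary graph. The displayed throughput computation shows that a perfect matching $M$ encodes a schedule of total flow $\omega(M)+(T-\lvert J\rvert)F_0$, and since the additive term is a constant independent of $M$, maximizing throughput is equivalent to maximizing $\omega(M)$ over perfect matchings. To turn this into a genuine equivalence I would establish two things: (i) every perfect matching of the auxiliary graph decodes to a feasible \textsc{MFASS} schedule, and (ii) every feasible schedule arises from some perfect matching of equal weight. Direction (ii) I would handle by grouping the jobs of a given schedule by time period, sending the two-job periods to edges of $M_1$ and the one-job periods to edges $\{a,a'\}$ of $M_2$, and then completing the remaining primed vertices and dummy vertices into a perfect matching.

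The crux of both directions is the period-count constraint, namely that an admissible schedule uses at most $T$ periods (the bound of two jobs per period is automatic, as each edge covers at most two job-vertices). The identity $2\lvert M_1\rvert+\lvert M_2\rvert=\lvert J\rvert$ gives $\lvert M_1\rvert+\lvert M_2\rvert=\lvert J\rvert-\lvert M_1\rvert$, so feasibility is exactly the requirement $\lvert M_1\rvert\geqslant p=\max\{0,\lvert J\rvert-T\}$. This is precisely what the dummy set $W$ enforces: its $2p$ vertices can only be matched to primed vertices $a'$, which forces $2p$ jobs out of the singleton option and hence into at least $p$ pairs. The role of $W'$ is to absorb any further free primed vertices so that a perfect matching exists across the whole admissible range $p\leqslant\lvert M_1\rvert\leqslant\lfloor\lvert J\rvert/2\rfloor$. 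I expect the main obstacle to be exactly this existence bookkeeping: one must check that the sizes $\lvert W\rvert=2p$ and $\lvert W'\rvert=2(\lfloor\lvert J\rvert/2\rfloor-p)$ make the total number of vertices even and allow the leftover $W'$ vertices to be matched among themselves using only the prescribed zero-weight edges, for every attainable value of $\lvert M_1\rvert$.

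With correctness in hand, the running time is immediate. The auxiliary graph has $2\lvert J\rvert+2p+2(\lfloor\lvert J\rvert/2\rfloor-p)=2\lvert J\rvert+2\lfloor\lvert J\rvert/2\rfloor=O(\lvert J\rvert)$ vertices and $O(\lvert J\rvert^2)$ edges. A maximum weight perfect matching on $n$ vertices can be computed in $O(n^3)$ time by Gabow's implementation of the blossom algorithm, which here yields $O(\lvert J\rvert^3)$. I would also remark that assembling the edge weights needs the values $F_0$, the $F_a$ and the $F_{ab}$, i.e. $O(\lvert J\rvert^2)$ maximum-flow computations performed as preprocessing; the combinatorial matching phase to which the bound refers is the dominant $O(\lvert J\rvert^3)$ term.
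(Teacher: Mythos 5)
Your proposal follows essentially the same route as the paper: the same auxiliary graph, the same throughput identity $\omega(M)+(T-\lvert J\rvert)F_0$ reducing the problem to maximum weight perfect matching, the same role for $W$ in enforcing $\lvert M_1\rvert\geqslant p$, and the same appeal to Gabow's $O(n^3)$ blossom implementation. Your extra care about the reverse direction of the correspondence and about the $O(\lvert J\rvert^2)$ max-flow preprocessing is sound and, if anything, slightly more explicit than the paper's own treatment.
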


\subsection{The single node case}\label{subsec:single_node}
Consider a network with a single transshipment node $v$, a job set $J$, a time horizon $T$ and $K=2$. We use the notation $J^-=\delta^-(v)\cap J$ and $J^+=\delta^+(v)\cap J$ and assume without loss of generality that $\lvert J^-\rvert\leqslant\lvert J^+\rvert$. We order the arcs in $J^-$ and $J^+$ such that the capacities are non-increasing, i.e. $J^-=\{a_1,\ldots,a_r\}$ and $J^+=\{b_1,\ldots,b_s\}$ ($s\geqslant r$) with
\begin{align*}
  u_{a_1}&\geqslant u_{a_2}\geqslant\cdots\geqslant u_{a_r}, & u_{b_1}&\geqslant u_{b_2}\geqslant\cdots\geqslant u_{b_s}.
\end{align*}
Note that it is necessary for feasibility that $r+s\leqslant 2T$, and in particular $r\leqslant T$. We will show that an optimal solution can be obtained as follows.
\begin{proposition}\label{prop:K_2_single_node}
An optimal solution for the single node problem with $K=2$ is given by the following schedule.
\begin{itemize}
\item For $i=1,2,\ldots,r$ take arcs $a_i$ and $b_i$ out in time period $i$.
\item For $i=r+1,r+2,\ldots,\min\{T,\,2T-s\}$ take arc $b_i$ out in time period $i$.
\item If $s>T$ then for $i=2T-s+1,2T-s+2,\ldots,T$ take arcs $b_i$ and $b_{2T+1-i}$ out in time
  period $i$.
\end{itemize}
\end{proposition}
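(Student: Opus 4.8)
The plan is to collapse the objective to a one–dimensional quantity and then pin down the optimal schedule by a chain of exchange arguments. First I would observe that, since $v$ is the only transshipment node, the maximum flow in a single period equals the min-cut value $\min\{C^- - x,\,C^+ - y\}$, where $C^- = \sum_{a\in\delta^-(v)}u_a$ and $C^+ = \sum_{a\in\delta^+(v)}u_a$ are the total in- and out-capacities at $v$, and $x$ (resp.\ $y$) is the total capacity of the incoming (resp.\ outgoing) job arcs shut in that period. Writing $x_i,y_i$ for these quantities in period $i$, using that each job is scheduled exactly once (so $\sum_i x_i = \sum_{a\in J^-}u_a =: U^-$ and $\sum_i y_i =: U^+$), and applying $\min\{p,q\}=\tfrac12(p+q)-\tfrac12|p-q|$, the total flow becomes
\begin{equation*}
\sum_{i=1}^T \min\{C^- - x_i,\,C^+ - y_i\} = \tfrac{T}{2}(C^-+C^+) - \tfrac12(U^-+U^+) - \tfrac12\sum_{i=1}^T\bigl|(y_i-x_i)-\Delta\bigr|,\qquad \Delta := C^+ - C^-.
\end{equation*}
Hence maximizing flow is equivalent to minimizing $\sum_i|d_i-\Delta|$ with $d_i := y_i - x_i$, i.e.\ to making the signed per-period load $d_i$ as close to $\Delta$ as possible in an $\ell_1$ sense. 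This reformulation turns the task into a purely combinatorial partitioning question and is the workhorse of the argument.

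Next I would fix the structure of an optimal partition through exchange lemmas, each relocating one or two jobs and using convexity of $t\mapsto|t-\Delta|$ to show the objective does not increase; concretely, whenever two periods carry loads forming a multiset $\{\alpha,\beta\}$ with the same total as an alternative multiset that is majorized by it, convexity makes the alternative no worse. The steps, in order: (i) no period need contain two incoming arcs, since $r=|J^-|\leqslant|J^+|=s$ always leaves an outgoing arc to disentangle against; (ii) every incoming arc may be assumed to share its period with an outgoing arc, giving exactly $r$ mixed pairs (merging a lone incoming arc with a singleton outgoing arc replaces $\{-u_a,u_b\}$ by the majorized $\{u_b-u_a,0\}$); (iii) these $r$ mixed pairs use the $r$ largest outgoing arcs, matched to the incoming arcs in sorted order $a_i\leftrightarrow b_i$ — the monotone coupling, which minimizes $\sum_i f(y_i-x_i)$ over matchings for convex $f$, and is exactly the first part of the stated schedule; (iv) the remaining $s-r$ outgoing arcs are placed as singletons while periods last and, once they must be doubled up (the case $s>T$), are paired largest-with-smallest, $\{b_i,b_{2T+1-i}\}$, which equalizes the pair-sums and so minimizes $\sum f(\text{pair-sum})$ for convex $f$ by the standard majorization argument. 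A final bookkeeping check that the three parts use exactly $T$ periods and cover each arc once — using $r+s\leqslant 2T$ and $r\leqslant T$ — confirms that the stated schedule realizes this optimal partition.

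The main obstacle is step (iii) together with the global coupling between the parts: showing that the incoming arcs should be matched to the \emph{largest} outgoing arcs rather than to some other size class is delicate, because the sign and magnitude of $\Delta$ change which configurations are favourable, and the decision cannot be analysed pair-by-pair in isolation — reassigning an outgoing arc out of a mixed pair simultaneously alters a singleton or an out–out pair elsewhere. I expect the cleanest route is to prove one robust exchange inequality for the supermodular map $(x,y)\mapsto\min\{C^- - x,\,C^+ - y\}$ (equivalently for $|(y-x)-\Delta|$) valid for all signs of $\Delta$, and then apply it repeatedly to drive an arbitrary optimal schedule to the canonical sorted form. A secondary, more routine difficulty is the boundary between the regimes $s\leqslant T$ and $s>T$, where the index ranges in the second and third parts of the schedule degenerate and some $b_i$ named in those ranges simply do not exist; these corner cases must be reconciled with the counting so that the claimed schedule is well defined and uses all $T$ periods.
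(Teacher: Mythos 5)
Your proposal is correct in outline and shares the paper's overall skeleton --- start from an optimal schedule and drive it to the canonical one by local exchanges --- but the engine is genuinely different. The paper works directly with the per-period flows $\min\{X_\cdot,Y_\cdot\}$ and proves a bespoke four-variable inequality (Lemma~\ref{lem:inequality}: if $x_3,x_4\in[x_1,x_2]$, $x_3+x_4=x_1+x_2$ and $x_5\leqslant x_6$ then $\min\{x_3,x_6\}+\min\{x_4,x_5\}\geqslant\min\{x_1,x_6\}+\min\{x_2,x_5\}$), which it then applies in a nine-case analysis (plus two further cases for the out-arc tail). Your identity $\min\{p,q\}=\tfrac12(p+q)-\tfrac12|p-q|$ collapses the objective to minimizing $\sum_i|d_i-\Delta|$, after which every exchange is an instance of the two-point Karamata inequality for the convex map $t\mapsto|t-\Delta|$; this unifies the paper's cases under one principle and makes the role of convexity/supermodularity explicit, at the cost of the (routine) observation that the linear part of the objective is schedule-independent. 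The obstacle you flag in step (iii) in fact dissolves: you never need a global argument, because every reassignment required --- moving one outgoing arc between a mixed pair and a singleton, an out--out pair, or another mixed pair --- is a single swap between two periods that preserves the load sum and produces a load pair lying inside the interval spanned by the old one, hence majorized by it; checking this for each configuration is exactly the content of the paper's Cases 1.1--1.9, so the ``one robust exchange inequality'' you ask for is the two-element majorization step, which is Lemma~\ref{lem:inequality} in disguise. To make the chain of exchanges terminate you should, as the paper does, order them by a potential such as the length of the prefix on which the schedule agrees with the canonical one; and the boundary bookkeeping ($s\leqslant T$ versus $s>T$, feasibility via $r+s\leqslant 2T$) is as routine as you expect.
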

For the proof of Proposition~\ref{prop:K_2_single_node} we will need the following notation for the
inbound and outbound capacities under various outage scenarios.
\begin{align*}
X&= \sum_{a\in\delta^-(v)}u_a, & Y&= \sum_{a\in\delta^+(v)}u_a, \\
X_i &= X-u_{a_i}\text{ for }1\leqslant i\leqslant r, & Y_i &= Y-u_{b_i}\text{ for }1\leqslant
i\leqslant s, \\
X_{ij} &= X-u_{a_i}-u_{a_j}\text{ for }1\leqslant i<j\leqslant r, & Y_{ij} &=
Y-u_{b_i}-u_{b_j}\text{ for }1\leqslant i<j\leqslant s.
\end{align*}
We need the following inequality.
\begin{lemma}\label{lem:inequality}
For any real numbers $x_1,\ldots,x_6$ satisfying $x_3,x_4\in[x_1,\,x_2]$, $x_3+x_4=x_1+x_2$ and $x_5\leqslant x_6$, we
 have
\[\min\{x_3,\,x_6\}+\min\{x_4,\,x_5\}\geqslant\min\{x_1,\,x_6\}+\min\{x_2,\,x_5\}.\]
\end{lemma}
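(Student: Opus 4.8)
The plan is to eliminate the $\min$ operators, use the sum constraint to cancel the linear parts, and reduce everything to a statement about a single convex function. Writing $(z)^+=\max\{z,0\}$ and using the identity $\min\{a,c\}=a-(a-c)^+$, the inequality to be proved becomes, after substituting and invoking $x_3+x_4=x_1+x_2$ to cancel the linear terms,
\[(x_1-x_6)^+ + (x_2-x_5)^+ \ \geqslant\ (x_3-x_6)^+ + (x_4-x_5)^+ .\]
Note that $x_3,x_4\in[x_1,x_2]$ forces $x_1\leqslant x_2$, so this is a comparison between the ``spread out'' pair $(x_1,x_2)$ and the ``central'' pair $(x_3,x_4)$, which share the same sum.

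Next I would parametrize by the common midpoint $m=\tfrac12(x_1+x_2)$ and half-spread $\Delta=\tfrac12(x_2-x_1)\geqslant 0$, writing $x_1=m-\Delta$, $x_2=m+\Delta$, and $x_3=m+\delta$, $x_4=m-\delta$ with $\lvert\delta\rvert\leqslant\Delta$. Setting $A=x_6$ and $B=x_5$ (so $A\geqslant B$) and defining
\[G(\delta)=(m+\delta-A)^+ + (m-\delta-B)^+,\]
the right-hand side above is exactly $G(\delta)$ and the left-hand side is $G(-\Delta)$. The function $G$ is convex in $\delta$, being the sum of a convex increasing and a convex decreasing function of $\delta$, so its maximum over $[-\Delta,\Delta]$ is attained at an endpoint. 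Hence $G(\delta)\leqslant\max\{G(-\Delta),G(\Delta)\}$, and it remains only to show that the correct endpoint wins, i.e. $G(-\Delta)\geqslant G(\Delta)$.

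This last step is where the hypothesis $x_5\leqslant x_6$ does the real work, and is the only point requiring care. I would write $G(-\Delta)-G(\Delta)=\rho(B)-\rho(A)$, where $\rho(c)=(m+\Delta-c)^+-(m-\Delta-c)^+$. Substituting $u=m-c$ turns $\rho$ into $r(u)=(u+\Delta)^+-(u-\Delta)^+$, a piecewise-linear function that is constantly $0$ for $u\leqslant-\Delta$, rises with slope $1$ on $[-\Delta,\Delta]$, and is constantly $2\Delta$ for $u\geqslant\Delta$; in particular $r$ is non-decreasing in $u$, so $\rho$ is non-increasing in $c$. Since $B\leqslant A$ this gives $\rho(B)\geqslant\rho(A)$, hence $G(-\Delta)\geqslant G(\Delta)$, completing the argument. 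The main obstacle I anticipate is purely organizational: the inequality hides an asymmetric pairing (the larger threshold $x_6$ is attached to both $x_3$ and $x_1$, the smaller threshold $x_5$ to both $x_4$ and $x_2$), and one does not know a priori whether $x_3\geqslant x_4$. The convexity reduction to endpoints is precisely what sidesteps this, so no case analysis on the relative order of the $x_i$ is needed.
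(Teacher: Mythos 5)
Your proof is correct, and it takes a genuinely different route from the paper's. The paper simply expands the left-hand side as $\min\{x_3+x_4,\,x_3+x_5,\,x_6+x_4,\,x_6+x_5\}$ and checks in one line each that all four sums dominate $\min\{x_1,\,x_6\}+\min\{x_2,\,x_5\}$, using only $x_3,x_4\geqslant x_1$, $x_6\geqslant x_5$, $x_3+x_4=x_1+x_2$, and the trivial bound $\min\{a,b\}\leqslant a,b$ --- four displayed inequalities and no further structure. You instead rewrite each $\min$ via $\min\{a,c\}=a-(a-c)^+$, cancel the linear part using the sum constraint, and recognize the residual inequality as the statement that the convex function $G(\delta)=(m+\delta-A)^++(m-\delta-B)^+$ attains its maximum over $[-\Delta,\Delta]$ at the left endpoint; the endpoint comparison $G(-\Delta)\geqslant G(\Delta)$ is then correctly reduced, via $\rho(c)=(m+\Delta-c)^+-(m-\Delta-c)^+$ and the monotonicity of $u\mapsto(u+\Delta)^+-(u-\Delta)^+$, to the hypothesis $x_5\leqslant x_6$. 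Every step checks out, including the preliminary observation that the hypotheses force $x_1\leqslant x_2$ so the parametrization is well defined, and the degenerate case $\Delta=0$ causes no trouble. What your approach buys is structural insight and generality: it exhibits the lemma as a convex-order (majorization) phenomenon --- $(x_3,x_4)$ has the same sum as but smaller spread than $(x_1,x_2)$ --- with the asymmetric pairing of thresholds handled by a separate monotone-rearrangement step, and the same argument would survive replacing $(\cdot)^+$ by any convex nondecreasing penalty. What it costs is length and machinery relative to the paper's four-line verification, which is the more economical choice for a lemma that is only ever invoked with these specific substitutions.
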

\begin{proof}
The LHS is $\min\{x_3+x_4,\,x_3+x_5,\,x_6+x_4,\,x_6+x_5\}$, and we have
\begin{align*}
  x_3+x_4 = x_1+x_2 &\geqslant \min\{x_1,\,x_6\}+\min\{x_2,\,x_5\},\\
  x_3+x_5 \geqslant x_1+x_5 &\geqslant \min\{x_1,\,x_6\}+\min\{x_2,\,x_5\},\\ 
  x_4+x_6 \geqslant x_1+x_5 &\geqslant \min\{x_1,\,x_6\}+\min\{x_2,\,x_5\}, \ \ \text{and}\\ 
  x_6+x_5 &\geqslant \min\{x_1,\,x_6\}+\min\{x_2,\,x_5\}. \qedhere
\end{align*}
\end{proof}
\begin{proof}[Proof of Proposition~\ref{prop:K_2_single_node}.] 
Let $S$ be the schedule described in the proposition, and let $S_i$ be the set of arcs that are
scheduled to be shut in period $i$ ($i=1,\ldots,T$). For the sake of contradiction, suppose that $S$ is not optimal. Among all optimal
schedules we can choose one, say $S'$, that differs from $S$ as late as possible, i.e., such that
the smallest index $i$ with $S'_i\neq S_i$ is maximal, where $S'_i$ is the set of arcs that are shut down in
period $i$ according to schedule $S'$.
\begin{description}
\item[Case 1.] $i\leqslant r$. There are indices $p,q\geqslant i$ with $a_i\in S'_p$, and $b_i\in
  S'_q$. Without loss of generality, we may assume $p=i$, since otherwise $S'_i$ could be swapped
  with $S'_p$ to yield a schedule with the same objective value. Furthermore, $q>i$ since otherwise $S'_i=S_i$. Replacing $S'_i$ with $\{a_i,\,b_i\}$ and $S'_q$ with $S'_i\cup
  S'_q\setminus\{a_i,\,b_i\}$ we obtain another schedule $S''$ which agrees with $S$ for one time
  period more than $S'$. In order to arrive at the required contradiction we have to check that schedule $S''$ is not worse than schedule $S'$. Note
  that the schedules $S'$ and $S''$ differ only in periods $i$ and $q$. We distinguish several cases
  for the sets $S'_i$ and $S'_q$. For each case we write down the total flows in periods $i$ and
  $q$ for the schedules $S'$ and $S''$, and then we apply Lemma~\ref{lem:inequality} to verify that
  $S''$ is at least as good as $S'$. 
\begin{description}
\item[Case 1.1.] $S'_i=\{a_i,\,b_k\}$ and $S'_q=\{a_j,\,b_i\}$ for some $j\in\{i+1,\ldots,r\}$ and
    $k\in\{i+1,\ldots,s\}$.
    \begin{align*}
      S' :\ \min\{X_i,\,Y_k\} &+ \min\{X_j,\,Y_i\}, &
      S'':\ \min\{X_i,\,Y_i\} &+ \min\{X_j,\,Y_k\}.
    \end{align*}
The claim follows from Lemma~\ref{lem:inequality} with 
$(x_1,\ldots,x_6)=(X_i,\,X_j,\,X_j,\,X_i,\,Y_i,\,Y_k)$.
\item[Case 1.2.] $S'_i=\{a_i,\,b_k\}$ for some $k\in\{i+1,\ldots,s\}$, and $S'_q=\{b_i\}$. 
    \begin{align*}
      S' :\ \min\{X_i,\,Y_k\} &+ \min\{X,\,Y_i\}, &
      S'':\ \min\{X_i,\,Y_i\} &+ \min\{X,\,Y_k\}.
    \end{align*}
The claim follows from Lemma~\ref{lem:inequality} with 
$(x_1,\ldots,x_6)=(X_i,\,X,\,X,\,X_i,\,Y_i,\,Y_k)$.
\item[Case 1.3.] $S'_i=\{a_i\}$ and $S'_q=\{a_j,\,b_i\}$ for some $j\in\{i+1,\ldots,r\}$.
    \begin{align*}
      S' :\ \min\{X_i,\,Y\} &+ \min\{X_{j},\,Y_i\}, &
      S'':\ \min\{X_i,\,Y_i\} &+ \min\{X_{j},\,Y\}.
    \end{align*}
The claim follows from Lemma~\ref{lem:inequality} with 
$(x_1,\ldots,x_6)=(X_i,\,X_j,\,X_j,\,X_i,\,Y_i,\,Y)$.
\item[Case 1.4.] $S'_i=\{a_i\}$ and $S'_q=\{b_i\}$. 
    \begin{align*}
      S' :\ \min\{X_i,\,Y\} &+ \min\{X,\,Y_i\}, &
      S'':\ \min\{X_i,\,Y_i\} &+ \min\{X,\,Y\}.
    \end{align*}
The claim follows from Lemma~\ref{lem:inequality} with
$(x_1,\ldots,x_6)=(X_i,\,X,\,X,\,X_i,\,Y_i,\,Y)$.
\item[Case 1.5.] $S'_i=\{a_i,\,a_j\}$ and $S'_q=\{b_i,\,b_k\}$. 
    \begin{align*}
      S' :\ \min\{X_{ij},\,Y\} &+ \min\{X,\,Y_{ik}\}, &
      S'':\ \min\{X_i,\,Y_i\} &+ \min\{X_{j},\,Y_{k}\}.
    \end{align*}
The claim follows from Lemma~\ref{lem:inequality} applied twice, first with
$(x_1,\ldots,x_6)=(X_{ij},\,X,\,X_j,\,X_i,\,Y_i,\,Y_k)$ and then with
$(x_1,\ldots,x_6)=(Y_{ik},\,Y,\,Y_i,\,Y_k,\,X_{ij},\,X)$:
\[\min\{X_j,\,Y_k\}+\min\{X_i,\,Y_i\} \geqslant \min\{X_{ij},\,Y_k\}+\min\{X,\,Y_{i}\} \geqslant \min\{X_{ij},\,Y\}+\min\{X,\,Y_{ik}\}.\]
\item[Case 1.6.] $S'_i=\{a_i,\,a_j\}$ for some $j\in\{i+1,\ldots,r\}$, and $S'_q=\{b_i\}$. 
    \begin{align*}
      S' :\ \min\{X_{ij},\,Y\} &+ \min\{X,\,Y_{i}\}, &
      S'':\ \min\{X_i,\,Y_i\} &+ \min\{X_{j},\,Y\}.
    \end{align*}
The claim follows from Lemma~\ref{lem:inequality} with
$(x_1,\ldots,x_6)=(X_{ij},\,X,\,X_j,\,X_i,\,Y_i,\,Y)$.
\item[Case 1.7.] $S'_i=\{a_i\}$ and, $S'_q=\{b_i,\,b_k\}$ for some $k\in\{i+1,\ldots,s\}$. 
    \begin{align*}
      S' :\ \min\{X_{i},\,Y\} &+ \min\{X,\,Y_{ik}\}, &
      S'':\ \min\{X_i,\,Y_i\} &+ \min\{X,\,Y_{k}\}.
    \end{align*}
The claim follows from Lemma~\ref{lem:inequality} with
$(x_1,\ldots,x_6)=(Y_{ik},\,Y,\,Y_k,\,Y_i,\,X_i,\,X)$.
\item[Case 1.8.] $S'_i=\{a_i,\,a_j\}$ and $S'_q=\{a_k,b_i\}$ for some $j,k\in\{i+1,\ldots,r\}$.
    \begin{align*}
      S' :\ \min\{X_{ij},\,Y\} &+ \min\{X_k,\,Y_{i}\}, &
      S'':\ \min\{X_i,\,Y_i\} &+ \min\{X_{jk},\,Y\}.
    \end{align*}
The claim follows from Lemma~\ref{lem:inequality} with
$(x_1,\ldots,x_6)=(X_{ij},\,X_k,\,X_{jk},\,X_i,\,Y_i,\,Y)$.
\item[Case 1.9.] $S'_i=\{a_i,\,b_j\}$ and $S'_q=\{b_k,\,b_i\}$ for some $j,k\in\{i+1,\ldots,s\}$.
    \begin{align*}
      S' :\ \min\{X_{i},\,Y_j\} &+ \min\{X,\,Y_{ik}\}, &
      S'':\ \min\{X_i,\,Y_i\} &+ \min\{X,\,Y_{jk}\}.
    \end{align*}
The claim follows from Lemma~\ref{lem:inequality} with
$(x_1,\ldots,x_6)=(Y_{ik},\,Y_j,\,Y_{jk},\,Y_i,\,X_i,\,X)$.
\end{description}
\item[Case 2.] $i>r$ and $S_i=\{b_i\}$. Without loss of generality, we assume that $b_i\in S'_i$, and then
  $S'_i\neq S_i$ implies $S'_i=\{b_i,\,b_j\}$ for some $j\in\{i+1,\ldots,s\}$. Furthermore,
  $S'_{i}\cup S'_{i+1}\cup\cdots\cup S'_{T}=S_{i}\cup S_{i+1}\cup\cdots\cup S_{T}$, and from $\lvert S_i\rvert=1$ and $\lvert S'_i\rvert=2$ it follows that $\lvert S'_q\rvert\leqslant 1$ for some
  $q\in\{i+1,\ldots,T\}$. Consequently, $S'_q=\emptyset$ or $S'_q=\{b_k\}$ for some $k\in\{i+1,\ldots,s\}$. Replacing $S'_i$ with
  $\{b_i\}$ and $S'_q$ with $\{b_j\}\cup S'_q$ we obtain another schedule $S''$ which agrees with
  $S$ for one time period more than $S'$, and we claim that $S''$ is not worse than $S'$. If $S'_q=\{b_k\}$ then the total flows in periods $i$ and $q$ are
\begin{align*}
      S' :\ \min\{X,\,Y_{ij}\} &+ \min\{X,\,Y_{k}\}, &
      S'':\ \min\{X,\,Y_i\} &+ \min\{X,\,Y_{jk}\},
    \end{align*}
and the claim follows from Lemma~\ref{lem:inequality} with
$(x_1,\ldots,x_6)=(Y_{ij},\,Y_k,\,Y_{jk},\,Y_i,\,X,\,X)$.
If $S'_q=\emptyset$ then the total flows in periods $i$ and $q$ are
\begin{align*}
      S' :\ \min\{X,\,Y_{ij}\} &+ \min\{X,\,Y\}, &
      S'':\ \min\{X,\,Y_i\} &+ \min\{X,\,Y_{j}\},
    \end{align*}
and the claim follows from Lemma~\ref{lem:inequality} with
$(x_1,\ldots,x_6)=(Y_{ij},\,Y,\,Y_j,\,Y_i,\,X,\,X)$.
\item[Case 3.] $i>r$ and $S_i=\{b_i,\,b_{2T+1-i}\}$. We have $S'_i\cup\cdots\cup
  S'_T=S_i\cup\cdots\cup S_T=\{b_i,\,b_{i+1},\ldots,b_{2T+1-i}\}$. This implies $\lvert S_p\rvert=2$
  for all $p\in\{i,\ldots,T\}$. Without loss of generality, we assume 
  $S'_i=\{b_i,\,b_j\}$ for some $j\in\{i+1,\ldots,2T-i\}$, and there exists $q\in\{i+1,\ldots,T\}$ with $S'_q=\{b_k,b_\ell\}$ for
  $\ell=2T+1-i$ and some $k\in\{i+1,\ldots,2T-i\}$. Replacing $S'_i$ with
  $\{b_i,\,b_\ell\}$ and $S'_q$ with $\{b_j,\,b_k\}$ we obtain another schedule $S''$ which agrees with $S$
  for one time period more than $S'$. The total flows in periods $i$ and $q$ are
\begin{align*}
      S' :\ \min\{X,\,Y_{ij}\} &+ \min\{X,\,Y_{k\ell}\}, &
      S'':\ \min\{X,\,Y_{i\ell}\} &+ \min\{X,\,Y_{jk}\}.
    \end{align*}
From Lemma~\ref{lem:inequality} with
$(x_1,\ldots,x_6)=(Y_{ij},\,Y_{k\ell},\,Y_{i\ell},\,Y_{jk},\,X,\,X)$
it follows that $S''$ is at least as good as $S'$ and this is the required contradiction.\qedhere 
\end{description}
\end{proof}
Since sorting the arcs dominates the run-time of the algorithm to find the solution described in Proposition~\ref{prop:K_2_single_node} we obtain the following stronger run-time bound for the single-node case.
\begin{corollary}\label{cor:k=2_single_node}
For $K=2$ and a single transshipment node \textsc{MFASS} can be solved in time $O(\lvert J\rvert\log\lvert J\rvert)$.
\end{corollary}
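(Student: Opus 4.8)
The plan is to read an explicit construction off Proposition~\ref{prop:K_2_single_node} and then account for its cost step by step, showing that every operation except the initial sorting is linear. Since $v$ is the only transshipment node, every job arc is incident to $v$, so $J=J^-\cup J^+$ with $J^-=\delta^-(v)\cap J$ and $J^+=\delta^+(v)\cap J$; a single scan of $J$ therefore partitions it and computes $r=\lvert J^-\rvert$ and $s=\lvert J^+\rvert$ in $O(\lvert J\rvert)$ time. The assumption $r\leqslant s$ is without loss of generality at cost $O(1)$: reversing all arcs and swapping $s$ and $t$ leaves every maximum-flow value unchanged while interchanging the roles of $J^-$ and $J^+$, so if $r>s$ we simply relabel.

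Next I would sort the two lists. Arranging $J^-$ and $J^+$ so that their capacities are non-increasing, as required for the indexing $a_1,\ldots,a_r$ and $b_1,\ldots,b_s$, takes $O(r\log r)+O(s\log s)=O(\lvert J\rvert\log\lvert J\rvert)$ time. With the arcs in sorted order the schedule $S$ is produced directly from its three defining rules: assign $\{a_i,b_i\}$ to period $i$ for $i\leqslant r$, assign $\{b_i\}$ to period $i$ for $r<i\leqslant\min\{T,\,2T-s\}$, and (when $s>T$) assign $\{b_i,b_{2T+1-i}\}$ to period $i$ for $2T-s<i\leqslant T$. Each rule is a simple index-based loop that touches every arc exactly once, so building $S$ costs only $O(\lvert J\rvert)$.

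Finally I would observe that no maximum-flow computation is needed at all, which is the one substantive point and the closest thing to an obstacle here. With a single transshipment node the flow in any period is simply $\min\{X-(\text{shut in-capacity}),\,Y-(\text{shut out-capacity})\}$, so the value of $S$ is obtained by summing these minima over the $T$ periods, using the quantities $X,Y,X_i,Y_i,X_{ij},Y_{ij}$ that are available in $O(1)$ each from the sorted arrays; this is again $O(\lvert J\rvert)$. Collecting the costs, the sorting step dominates and the total run-time is $O(\lvert J\rvert\log\lvert J\rvert)$. The correctness of the closed-form $\min$ expression for every outage scenario arising in $S$ is exactly what the flow formulas in Proposition~\ref{prop:K_2_single_node} already encode, so once that proposition is granted the corollary follows with no work beyond the bookkeeping above—and this flow-oracle-free evaluation is precisely what separates the $O(\lvert J\rvert\log\lvert J\rvert)$ bound from the general $O(\lvert J\rvert^3)$ bound of Proposition~\ref{prop:poly_k_2}.
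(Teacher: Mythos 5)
Your proposal is correct and follows essentially the same route as the paper, whose entire proof is the one-line observation that sorting the arcs by capacity dominates the cost of constructing the schedule of Proposition~\ref{prop:K_2_single_node}. You merely spell out the details the paper leaves implicit (linear-time partition into $J^-$ and $J^+$, linear-time schedule construction, and evaluation of each period's flow as a closed-form $\min$ rather than via a max-flow oracle), all of which is accurate.
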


\section{Hardness results}\label{sec:hardness}
Before proving the hardness results we make precise the definition of \emph{series-parallel network}. In the present paper this term refers to a \emph{two-terminal series-parallel network}: a network that has a single source and single sink and is constructed by a sequence of series and parallel compositions starting from single arcs. For two networks $N_1$ and $N_2$ the \emph{parallel composition} of $N_1$ and $N_2$ is obtained by identifying the source node $s_1$ and sink node $t_1$ of $N_1$ with the source node $s_2$ and sink node $t_2$ of $N_2$, respectively. The \emph{series composition} of $N_1$ and $N_2$ is obtained by identifying the sink node $t_1$ of $N_1$ with the source node $s_2$ of $N_2$. The construction of a series parallel network can be encoded into a tree, the so-called SP-tree, whose leaves are the arcs of the network. This is illustrated in Figure~\ref{fig:sp_tree}.
\begin{figure}[htb]
\centering
  \begin{minipage}{.3\linewidth}
\includegraphics[width=\textwidth]{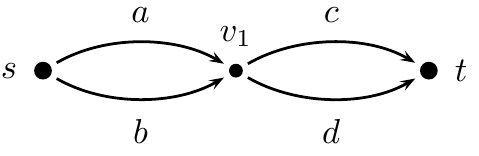}
  \end{minipage}\qquad
  \begin{minipage}{.3\linewidth}
\includegraphics[width=\textwidth]{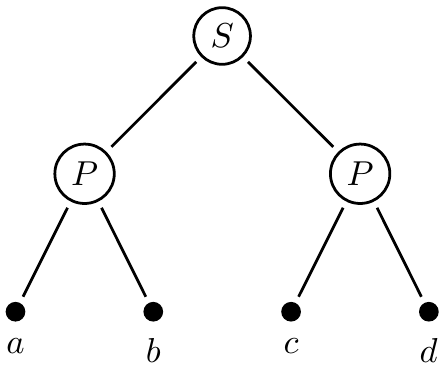}
  \end{minipage}
  \caption{A series-parallel network and the corresponding SP-tree.}
  \label{fig:sp_tree}
\end{figure}
\begin{proposition}\label{prop:strong_hardness}
The restriction of \textsc{MFASS} to the instance class $\mathcal C^3_{\text{sp}}\cap \mathcal C^3_{\text{bal}}\cap \mathcal C^3_{\text{aa}}$ is strongly NP-complete.
\end{proposition}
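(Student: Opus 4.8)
The plan is to establish membership in NP and then give a polynomial reduction from \textsc{3-Partition}, which is strongly NP-complete, taking care that all capacities stay polynomially bounded so that the hardness is strong. Membership is routine: a schedule assigning each job in $J=A$ to one of the $T$ periods is a polynomial certificate, and its objective value is obtained by solving the $T$ independent maximum-flow problems in which the arcs shut in a period have their capacities zeroed; feasibility (at most three jobs per period, each job scheduled once) is checked directly. The substance is the reduction, which I would build as a refinement of the known $K=3$ reduction so that it simultaneously produces a series-parallel network, a balanced network, and an instance with $J=A$.

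Starting from a \textsc{3-Partition} instance $a_1,\dots,a_{3m}$ with $\sum_i a_i=mB$ and $B/4<a_i<B/2$ (so every subset summing to $B$ has exactly three elements), I would set $T=m$ and design a network whose $3m$ job arcs carry capacities derived from the $a_i$ and are the \emph{only} arcs, forcing $J=A$; with $K=3$ and $3m$ arcs over $m$ periods, every period must then shut exactly three arcs. The design goal is that, in a period shutting arcs of total item weight $w_p$, the maximum flow behaves like $\min\{mB-w_p,\,(m-1)B\}$, i.e. there is an effective per-period bottleneck of $(m-1)B$. The total throughput is then $m(m-1)B-\sum_p\max\{0,\,w_p-B\}$, and since $\sum_p w_p=mB$ over $m$ periods this attains its maximum $m(m-1)B$ if and only if $w_p=B$ for every $p$, that is, iff the three arcs shut in each period form a triple summing to $B$ — a valid 3-partition.

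The main obstacle is realizing this penalty while keeping every transshipment node balanced. This is genuinely delicate, because in a balanced series-parallel network the static maximum flow always equals the out-capacity of $s$: an easy induction over the series/parallel compositions shows this, using at a series junction that the downstream out-capacity equals the in-capacity of the shared node, which is at least the upstream maximum flow. Hence there is \emph{no} static bottleneck below the total item capacity, and the effective cap of $(m-1)B$ can only emerge from the \emph{dynamics} of the shutdowns — specifically from the $\min$-behaviour of series compositions, where simultaneous capacity reductions on opposite sides of a series combine as a maximum rather than a sum. I would therefore engineer the cap inside the series-parallel structure so that removing a full triple routes the remaining up-capacity through a shared sub-network whose contribution is capped at $(m-1)B$, and restore balance at each junction by a complementary sub-network; the bulk of the work is choosing the capacities so that all in/out balances hold exactly, every arc is a job, and no period ever requires a fourth shutdown.

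Finally I would prove the equivalence in both directions. The forward direction is a direct flow computation: from a valid 3-partition, schedule each triple in its own period and check that every period attains $(m-1)B$, for a total of $m(m-1)B$. The reverse direction is the harder half: any schedule meeting the threshold must have $w_p=B$ in every period, which follows from the per-period bound $\min\{mB-w_p,\,(m-1)B\}$ together with $\sum_p w_p=mB$ and the fact that exactly three arcs are shut per period; the hypothesis $B/4<a_i<B/2$ then certifies that each such period is a genuine 3-partition triple. I expect verifying balance and series-parallelism of the gadget, and pinning down the reverse implication, to be where essentially all the difficulty lies.
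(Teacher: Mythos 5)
Your framing is right --- reduce from \textsc{3-Partition} with $T=m$, $K=3$ and $|J|=3m=KT$ so that exactly three arcs are shut per period, and argue that a throughput threshold is met iff every period's triple sums to $B$ --- but the proof has a genuine gap where the gadget should be. You correctly observe that in a balanced series-parallel network the static maximum flow equals the out-capacity of the source, so your target per-period value $\min\{mB-w_p,\,(m-1)B\}$ would have to come from an engineered bottleneck of capacity $(m-1)B$ sitting in series with the item arcs. That architecture cannot be completed under your own constraints: any bottleneck or ``complementary balancing'' subnetwork consists of additional arcs, and since $\mathcal C_{\text{aa}}$ forces $J=A$, those arcs carry jobs too, pushing $|J|$ above $KT=3m$ and destroying feasibility (or, if you enlarge $T$, destroying the exactly-three-per-period forcing). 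Moreover any balancing path around the bottleneck lets flow bypass it, which is exactly the tension you flag but do not resolve. So the construction you defer to ``the bulk of the work'' is not a routine verification; it is the missing idea.

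The paper sidesteps all of this with an affine shift: set $u'_i=3u_i-B$, so that $\sum_i u'_i=0$ and a triple sums to $B$ iff the shifted triple sums to $0$. The network has a single transshipment node $v$; each item with $u'_i\geqslant 0$ becomes an in-arc of capacity $u'_i$ and each item with $u'_i<0$ becomes an out-arc of capacity $-u'_i$. This is automatically series-parallel, automatically balanced (total in equals total out equals some $X$), and uses no arcs beyond the $3n$ item arcs, so $J=A$ and $|J|=3T$ hold for free. The upper bound is simply $(n-1)X$, because each in-arc is shut for one period; it is attained iff every arc is saturated whenever open, iff the residual network is balanced in every period, iff every period's shut triple has $u'$-sum zero. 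No artificial bottleneck is needed, and both directions of the equivalence are short. I would recommend replacing your penalty-function gadget with this transformation; without it, the reduction as you describe it does not go through.
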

\begin{proof}
We use reduction from \textsc{3-Partition}. Let a \textsc{3-Partition} instance be given by an
integer $B$ and a set $\{u_1,\ldots,u_{3n}\}$ of integers with $B/4<u_j<B/2$ for all $j$ and
$\sum_{j=1}^{3n}u_j=nB$. The problem is to decide if there is a partition of the set
$\{u_1,\ldots,u_{3n}\}$ into $n$ triples such that the sum of each triple equals $B$. We define new
numbers $u'_i$ for $i=1,\ldots,3n$ by $u'_i=3u_i-B$. Note that 
\begin{equation}\label{eq_zerosum}
\sum_{i=1}^{3n}u'_i=\sum_{i=1}^{3n}\left(3u_i-B\right)=3\sum_{i=1}^{3n}u_i-3nB=3nB-3nB=0,
\end{equation}
and for every triple $(i,j,k)$ we have
\begin{multline*}
u'_i+u'_j+u'_k=0 \iff \left(3u_i-B\right)+\left(3u_j-B\right)+\left(3u_k-B\right)=0 \\
\iff 3(u_i+u_j+u_k)-3B=0\iff u_i+u_j+u_k=B .
\end{multline*}
Without loss of generality we assume that for some integer $r$, we have $u'_i\geqslant 0$ for $i\leqslant r$ and $u'_i<0$ for $i>r$. We define an instance of our problem with $K=3$, $T=n$, a single transshipment node $v$ and the following arcs:
\begin{itemize}
\item For $i=1,2,\ldots,r$ there is an arc $a_i$ into $v$ having capacity $u'_i$, and
\item for $i=r+1,\ldots,3n$ there is an arc $a_i$ that goes out of $v$ and has capacity $-u'_i$.
\end{itemize}
This is illustrated in Figure~\ref{fig:strong_111}, where the arc labels represent capacities and
all arcs have an associated job, i.e., $J=A$. Obviously the network is series-parallel. From $K=3$, $T=n$ and $\lvert J\rvert=3n$ it follows
that we need to shut down exactly 3 arcs in every period. It follows from~\eqref{eq_zerosum} that the network
is balanced. Let $X=u_1+\ldots+u_r$ be the capacity of the network. Clearly, $(n-1)X$ is an upper
bound for the total throughput, and we claim that this bound can be achieved if and only if the set $\{u'_i\ :\
i=1,\ldots,3n\}$ can be partitioned into triples that sum up to zero, or equivalently, the set
$\{u_i\ :\ i=1,\ldots,3n\}$ can be partitioned into triples that sum up to $B$. First assume that 
\[\{1,\ldots,3n\}=\{i_1,\,j_1,\,k_1\}\cup\cdots\cup\{i_n,\,j_n,\,k_n\}\]
is a partition with $u'_{i_p}+u'_{j_p}+u'_{k_p}=0$ for all $p\in\{1,\ldots,n\}$. Consider the
schedule that shuts down the arcs $a_{i_p}$, $a_{j_p}$ and $a_{k_p}$ in period $r$. It follows from
$u'_{i_p}+u'_{j_p}+u'_{k_p}=0$ that the network with arc set
$A_p=A\setminus\left\{a_{i_p},\,a_{j_p},\,a_{k_p}\right\}$ is balanced, and therefore we get a feasible flow in which every arc in
$A_p$ is at capacity. Therefore, every arc is at capacity in $n-1$ periods and the total throughput equals
\[\sum_{i=1}^r(n-1)u_i=(n-1)X.\]
Conversely, if there is a schedule with a total throughput of $(n-1)X$ then every arc must be at
capacity in every period in which it is not shut down. This implies that in every period $p\in\{1,\ldots,n\}$ the network
with arc set $A_p=A\setminus\left\{a_{i_p},\,a_{j_p},\,a_{k_p}\right\}$,  is balanced, where $i_p$,
$j_p$ and $k_p$ are the indices of the arcs that are shut down in period $p$. Consequently
$u'_{i_p}+u'_{j_p}+u'_{k_p}=0$ for every $p\in\{1,\ldots,n\}$, and this yields a solution for the
\textsc{3-Partition} instance.  
\end{proof}
\begin{figure}[htb]
\centering
    \begin{minipage}[b]{.49\linewidth}
      \includegraphics[width=\textwidth]{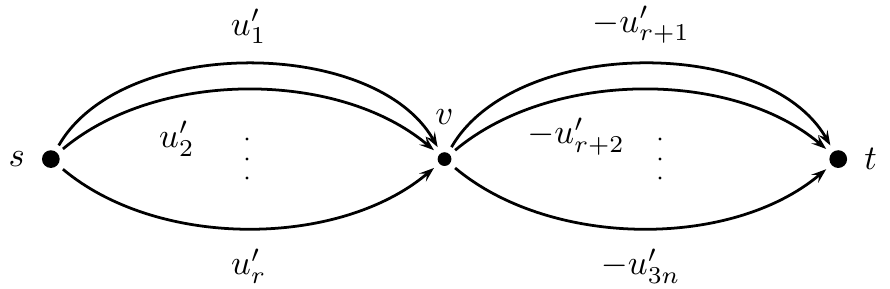}
      \caption{The network for $\mathcal C^3_{\text{sp}}\cap \mathcal C^3_{\text{bal}}\cap \mathcal C^3_{\text{aa}}$.}\label{fig:strong_111}
    \end{minipage}\hfill
    \begin{minipage}[b]{.49\linewidth}
      \includegraphics[width=\textwidth]{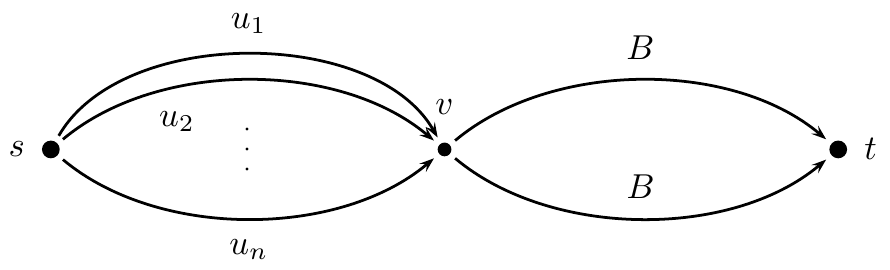}
      \caption{The network for $\mathcal C^{\lvert J\rvert-1}_{\text{sp}}\cap\mathcal C^{\lvert J\rvert-1}_{\text{bal}}\cap\mathcal C^{\lvert J\rvert-1}_{\text{aa}}$.}\label{fig:weak_111}
    \end{minipage}
\end{figure}
\begin{proposition}\label{prop:weak_hardness}
The restriction of \textsc{MFASS} to the instance class $\mathcal C^{\lvert J\rvert-1}_{\text{sp}}\cap\mathcal C^{\lvert J\rvert-1}_{\text{bal}}\cap\mathcal C^{\lvert J\rvert-1}_{\text{aa}}$ is NP-complete.
\end{proposition}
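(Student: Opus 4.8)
The plan is to prove membership in NP and then give a polynomial reduction from \textsc{Partition}. Since \textsc{Partition} is only weakly NP-complete (it has a pseudopolynomial algorithm), such a reduction is exactly what is needed for a claim of ordinary, rather than strong, NP-completeness. Membership in NP is immediate: a certificate is a schedule, i.e.\ an assignment of each arc to one of the $T$ periods respecting the bound $K$, and its objective value is obtained by solving $T$ maximum-flow problems, which is polynomial.

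For the reduction I would use a single-transshipment-node, balanced, series-parallel gadget in the spirit of the proof of Proposition~\ref{prop:strong_hardness}, but tuned to \textsc{Partition} and with $T=2$. Given integers $w_1,\dots,w_n$ with $\sum_i w_i=2S$, build a network with one transshipment node $v$, $n$ parallel in-arcs $s\to v$ of capacities $w_1,\dots,w_n$, and two parallel out-arcs $v\to t$ each of capacity $S$. Set $J=A$; the only transshipment node has in-capacity $2S$ equal to out-capacity $2S$, so the instance lies in $\mathcal C_{\text{bal}}\cap\mathcal C_{\text{sp}}\cap\mathcal C_{\text{aa}}$, and I take $K=\lvert J\rvert-1=n+1$. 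The decision question is whether the total throughput can reach the threshold $\theta:=X=2S$.

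The analytic core is that in a balanced single-node network, shutting a set $S_i$ in a period gives flow $X-\max(\mathrm{in}(S_i),\mathrm{out}(S_i))$, where $\mathrm{in}(S_i)$ and $\mathrm{out}(S_i)$ are the shut in- and out-capacities. With $T=2$ each arc is shut exactly once, so $S_1,S_2$ partition $A$ and the total loss is $\sum_i\max(\mathrm{in}(S_i),\mathrm{out}(S_i))\geqslant\tfrac12\sum_i(\mathrm{in}(S_i)+\mathrm{out}(S_i))=X$, with equality iff each $S_i$ is balanced. Hence the throughput is at most $2X-X=X$, attaining $X$ exactly when $A$ splits into two nonempty balanced groups. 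I would then observe that the two out-arcs force such a split to separate them: if both out-arcs were in one group, balance would drag all in-arcs into that group and leave the other empty. Thus each group carries out-capacity exactly $S$, hence in-capacity exactly $S$, which is precisely a \textsc{Partition} of the $w_i$; the converse direction builds the balanced split symmetrically, and I would check its group sizes never exceed $K=n+1$.

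The point needing the most care is the role of the bound $K=\lvert J\rvert-1$. Without it, the schedule shutting every arc in period~1 and nothing in period~2 is feasible and already attains throughput $(T-1)X=X$ regardless of the \textsc{Partition} data; this is exactly the ``everything at once'' solution that trivialises $\mathcal C^{\lvert J\rvert}_{\text{aa}}$. The constraint $K=\lvert J\rvert-1$ is precisely what forbids an all-shut period, forcing both $S_1$ and $S_2$ nonempty and making $\theta=X$ attainable if and only if the \textsc{Partition} instance is a yes-instance. I would therefore phrase the final equivalence relative to the $K$-restricted feasible region and make the forbidding of the all-at-once schedule the pivotal observation of the argument.
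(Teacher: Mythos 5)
Your proposal is correct and is essentially the paper's own reduction: the same single-transshipment-node gadget with $n$ in-arcs of capacities $w_1,\dots,w_n$ and two out-arcs of capacity $S$, the same parameters $T=2$, $J=A$, $K=\lvert J\rvert-1$, the same threshold $2S$, and the same pivotal observation that the bound $K=\lvert J\rvert-1$ is exactly what forbids the all-at-once schedule. The paper phrases the attainability analysis slightly more directly (each capacity-$S$ out-arc carries at most $S$ over the horizon, and a single-period flow of $2S$ would force all $n+2$ arcs to be shut in the other period), but this is the same argument as your balance/loss computation.
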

\begin{proof}
We use reduction from \textsc{Partition}. Let a \textsc{Partition} instance be given by an integer
$B$ and a set $\{u_1,\ldots,u_{n}\}$ of integers with $\sum_{j=1}^{n}u_j=2B$. The problem is to
decide if there is a partition of the set $\{u_1,\ldots,u_{n}\}$ into two parts such that the sum of
each part equals $B$. The network used for the reduction is shown in Figure~\ref{fig:weak_111},
where the arc labels represent capacities and all arcs have an associated job, i.e., $J=A$. Consider
this network for the time horizon $T=2$ and with $K=n+1=\lvert J\rvert-1$. Each of the two arcs of capacity $B$ can
carry at most $B$ units of flow over the whole time horizon, because it needs to be shut down for
one period. Therefor $2B$ is an upper bound for the total throughput. It is not possible to have a
flow of $2B$ in a single period, since otherwise all $n+2$ arcs would need to be shut in the other
period. Therefore, in order to achieve the bound of $2B$ we must have a flow of value $B$ in each time period. This is possible if and only if the total capacity of the arcs between $s$ and $v$ that are shut
down in period 1 is $B$, i.e., the \textsc{Partition} instance is a YES instance.
\end{proof}
Note that the algorithm from~\cite{boland2013unit_time} for series-parallel networks and $K=\lvert J\rvert$ which is pseudopolynomial for fixed $T$ can be adapted to the case $K=\lvert J\rvert-1$. This algorithm computes a list of $T$-dimensional vectors for each node of the SP-tree. The vectors at a node $v$ of the SP-tree represent the possible throughputs for the corresponding subnetwork: $(z_1,\ldots,z_T)$ is in the list at node $v$ if and only if the jobs for arcs in the subnetwork can be scheduled such that the maximum flow value for the subnetwork in time period $i$ is $z_i$ ($i=1,\ldots,T$). In each node of the tree we flag a vector that can only be achieved by scheduling all jobs at the same time (which is at most one per node in the tree). Finally, when we scan the list at the root node in order to determine the optimal solution, we exclude the flagged vector.

In~\cite{boland2013unit_time}, the class $\mathcal C_{\text{uc}}$ of instances where every arc has unit capacity was shown to be tractable when there is no limit for the number of jobs per time period. We finish this section with a proof that this class becomes NP-complete when such a limit is introduced.
\begin{proposition}\label{prop:unit_cap}
  The restriction of \textsc{MFASS} to the instance class $\mathcal C_{\text{uc}}$ is NP-complete.
\end{proposition}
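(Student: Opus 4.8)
The plan is to prove NP-hardness by reduction from \textsc{3-Partition}, mirroring the construction in the proof of Proposition~\ref{prop:strong_hardness} but encoding the input numbers through \emph{multiplicities} of unit-capacity arcs rather than through capacities. Given a \textsc{3-Partition} instance with integers $u_1,\ldots,u_{3n}$ satisfying $B/4<u_j<B/2$ and $\sum_j u_j=nB$, I would build a network consisting of $3n$ internally disjoint $s$--$t$ paths $P_1,\ldots,P_{3n}$, where $P_j$ is a series of $u_j$ unit-capacity arcs. I set $J=A$, $T=n$ and $K=B$; the network is series-parallel with all capacities equal to $1$, so the instance lies in $\mathcal C_{\text{uc}}$. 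Membership in NP is immediate, since a schedule is a polynomial-size certificate and the resulting per-period maximum flows can be evaluated in polynomial time.

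The key reformulation is that, because every arc has unit capacity, path $P_j$ carries one unit of flow in a given period exactly when \emph{none} of its arcs is shut down in that period, and zero units otherwise. Hence the flow in period $i$ equals $3n-h_i$, where $h_i$ is the number of paths having at least one shut-down arc in period $i$, and the total throughput is $3n^2-\sum_{i=1}^{T}h_i$. Writing $d_j$ for the number of periods in which $P_j$ has a shut-down arc, we have $\sum_i h_i=\sum_j d_j$, and since $J=A$ forces every path to be hit at least once, $\sum_j d_j\geqslant 3n$. Thus maximizing throughput is equivalent to minimizing the total number of path--period incidences $\sum_j d_j$. A preliminary observation I would record is that the $\lvert J\rvert=nB$ arcs must be scheduled into $n$ periods of capacity $K=B$, which forces exactly $B$ shut-downs in every period; the decision question then becomes whether throughput $3n(n-1)$, equivalently $\sum_j d_j=3n$, is attainable.

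For the forward direction, a partition of $\{u_1,\ldots,u_{3n}\}$ into triples summing to $B$ yields a schedule that devotes each period to a single triple: all arcs of the three corresponding paths are shut in that one period (using exactly $B$ slots), so each path is hit once, $\sum_j d_j=3n$, and the throughput is $3n(n-1)$. The reverse direction is where the real work lies and is the step I expect to be the main obstacle. Here I must argue that attaining $\sum_j d_j=3n$ forces each path to be confined to a single period: since every $d_j\geqslant 1$ and they sum to $3n$, each $d_j=1$, so the periods induce a partition of the paths. Letting $I_i\subseteq\{1,\ldots,3n\}$ be the indices of the paths confined to period $i$, the fact that each period holds exactly $B$ arcs gives $\sum_{j\in I_i}u_j=B$, and the bounds $B/4<u_j<B/2$ then force $\lvert I_i\rvert=3$, recovering a valid \textsc{3-Partition}. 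Conversely, if no such partition exists then no assignment of whole paths fills every period to exactly $B$, so some path must be split across two periods, giving $\sum_j d_j\geqslant 3n+1$ and throughput at most $3n(n-1)-1$. Since all capacities are unit and the instance has size polynomial in the unary size of the \textsc{3-Partition} input, the reduction in fact yields strong NP-completeness, which in particular establishes the claimed NP-completeness.
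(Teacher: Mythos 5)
Your reduction is correct, and it is the same reduction source (\textsc{3-Partition}) and the same encoding idea as the paper's proof: each number $u_j$ becomes a chain of $u_j$ unit-capacity job arcs, with $T=n$ and $K=B$, and the decision threshold is a throughput of $3n(n-1)$. The difference is in the gadget and in how the reverse direction is closed. The paper routes all $3n$ chains from $s$ into a common node $v$ and then sends $v$ to $t$ through a bottleneck of total capacity $3(n-1)$ consisting of non-job arcs (so its instance has $J\subsetneq A$); achieving flow $3(n-1)$ in a period then directly forces at most $3$ chains to be hit per period, and the partition into triples summing to $B$ follows without ever invoking the bounds $B/4<u_j<B/2$. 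You dispense with the bottleneck, take $J=A$, and instead argue globally: per-period flow is $3n-h_i$, the path--period incidences satisfy $\sum_j d_j\geqslant 3n$ since every path must be hit, equality forces each path into a single period, the pigeonhole on $\lvert J\rvert=nB$ forces exactly $B$ shutdowns per period, and only then do the bounds $B/4<u_j<B/2$ give triples. Both arguments are sound; yours buys a slightly stronger statement (your instance is simultaneously series-parallel, balanced, all-arcs-with-jobs, and unit-capacity), at the cost of genuinely needing the $B/4<u_j<B/2$ hypothesis, while the paper's bottleneck localizes the ``at most three paths per period'' argument to a capacity count and keeps the numeric bounds out of the correctness proof.
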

\begin{proof}
We use reduction from \textsc{3-Partition}. Let a \textsc{3-Partition} instance be given by an integer $B$ and a set $\{u_1,\ldots,u_{3n}\}$ of integers with $B/4<u_j<B/2$ for all $j$ and $\sum_{j=1}^{3n}u_j=nB$. This can be reduced to the instance presented in Figure~\ref{fig:unit_cap}, where every arc has unit capacity and the set $J$ is represented by dashed arcs.
\begin{figure}[htb]
  \centering
  \includegraphics{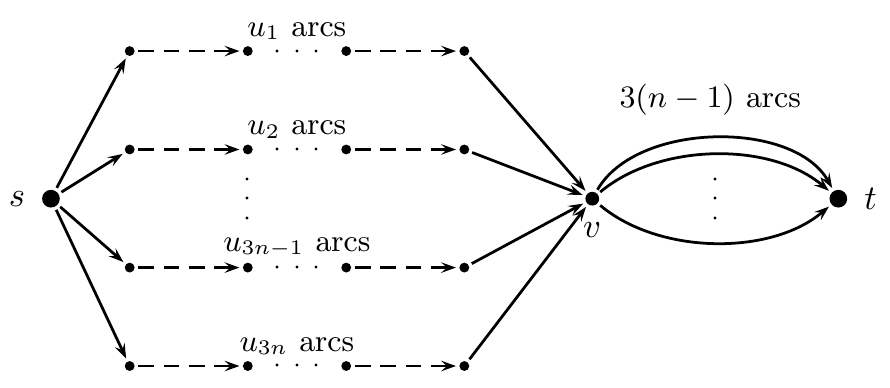}
  \caption{Instance for the reduction in the proof of Proposition~\ref{prop:unit_cap}. The dashed arcs indicate the set $J$ of arcs with an associated job.}
  \label{fig:unit_cap}
\end{figure}
Since \textsc{3-Partition} is strongly NP-hard we may assume that the numbers $u_i$ are bounded by a
polynomial in the input size, and this ensures that the network size is polynomial in the size of
the \textsc{3-Partition} instance. We consider this network with a time horizon $T=n$ and a bound of
$K=B$ jobs per time period. The total throughput is bounded by $3n(n-1)$ since the total capacity of
the arcs entering node $t$ is $3(n-1)$ and there are $n$ time periods. From $\lvert J\rvert=nB$ it follows that exactly $B$ jobs have to be
scheduled in each time period.  We claim that the bound of $3n(n-1)$ on the total throughput can
be achieved if and only if the \textsc{3-Partition} instance is a YES instance.
First suppose the \textsc{3-Partition} instance is a YES instance, and let
\[\{1,\ldots,3n\}=\{i_1,j_1,k_1\}\cup\cdots\cup \{i_n,j_n,k_n\}\] 
be a partition with $u_{i_p}+u_{j_p}+u_{k_p}=B$ for all $p\in\{1,\ldots,n\}$. We obtain a schedule
that achieves the upper bound as follows. In time period $p$ we shut down the arcs on the paths
number $i_p$, $j_p$ and $k_p$, where the paths between $s$ and $v$ are numbered
from top to bottom in Figure~\ref{fig:unit_cap}, i.e., the $i$-th path contains exactly $u_i$ dashed
arcs. Conversely, suppose that there is a schedule that
achieves a total throughput of $3n(n-1)$. For $p\in\{1,\ldots,n\}$ let $I_p$ be the set of paths on which at least one arc is shut down in period
$p$. In order to achieve a total throughput of $3n(n-1)$ we must have a flow of value $3(n-1)$ in
each time period. Therefore, in each period we can shut down arcs on at most 3 paths from $s$ to
$v$, i.e., $\lvert I_p\rvert\leqslant 3$ for all $p\in\{1,\ldots,n\}$. Since all dashed arcs have to
be shut down in some time period we have $I_1\cup\cdots\cup I_n=\{1,\ldots,3n\}$, and consequently,
$\lvert I_p\rvert=3$ for all $p$ and $I_p\cap I_{p'}=\emptyset$ for all $p\neq p'$. This implies
that in every time period all arcs on exactly 3 paths are shut down, hence $\sum_{i\in I_p}u_i=B$
for every $p\in\{1,\ldots,n\}$ and the 3-sets $I_1,\ldots,I_n$ form a solution of the \textsc{3-Partition} instance.
\end{proof}

\section{An FPTAS for series-parallel networks with fixed T}\label{sec:fptas}
In this section we restrict our attention to series-parallel networks. We modify the algorithm from~\cite{boland2013unit_time} such that the bound $K$ can be taken into account. For fixed time horizon $T$, this algorithm runs in pseudopolynomial time, and we use it together with scaling and rounding~\cite{williamson2011design} to design an FPTAS. 

The algorithm presented in~\cite{boland2013unit_time} starts at the leaves of the SP-tree and computes a list of vectors $z=(z_1,\ldots,z_T)$ for each node of the SP-tree, where the list at a node $v$ in the SP-tree contains exactly the vectors $z$ such that there exists some schedule for which the subnetwork corresponding to $v$ can carry flow $z_i$ in time period $i$ for $i=1,\ldots,T$. In the problem variant studied in~\cite{boland2013unit_time} there is no restriction on the number of arcs that can be shut in a period, so it is sufficient to keep track of the possible flow vectors at the nodes of the SP-tree. But the same capacity vector can be realised through different schedules. For instance, for the network shown in in Figure~\ref{fig:example}, there are three possibilities to get the flow vector $(7,0)$, i.e. 7 units in the first time period and zero flow in the second period:
\begin{itemize}
\item shut 2 arcs in period 1 (arcs with capacities 1 and 2), and 2 arcs in period 2 (arcs with capacities 8 and 7); or
\item shut 1 arc in period 1 (arc with capacity 1 or 2), and 3 arcs in period 2 (arcs with capacities 8, 7 and (2 or 1)); or
\item shut no arc in period 1, and all four arcs in period 2.
\end{itemize}
Thus with a limit $K$ for the number of shut arcs per time period it becomes important to keep track
of the number of arcs shut in each period along with maximum flow that can be sent in that
period. Let $j_i$ represent the number of arcs shut in the $i^{th}$ period. We determine lists of
\emph{job-capacity} vectors of the form  $z=((j_1,z_1),(j_2,z_2),\ldots,(j_T,z_T))$ at each node of the SP-tree. The interpretation of such a vector $z$ in the list of node $N$ is that there is a solution in which, for $i=1,\ldots,T$, in time period $i$ exactly $j_i$ arcs from the subnetwork corresponding to $N$ are shut, and this subnetwork has capacity $z_i$. Due to the symmetry with respect to the time periods it is no loss of generality to require the job-capacity vectors to be ordered. Hence we consider only vectors that satisfy, for $i=1,\ldots,T-1$, either $z_i>z_{i+1}$ or $z_i=z_{i+1}$ and $j_i\geqslant j_{i+1}$. We say that a vector with this property is in \emph{standard form}, and we note  that for every job-capacity vector there is a unique vector in standard form which can be obtained by a permutation of the entries. The list at a leaf node of the tree, corresponding to an arc $a$ of the network, consists of the unique vector $((0,u_a),(0,u_a),\ldots,(0,u_a),(1,0))$ if $a \in J$ or $((0,u_a),(0,u_a),\ldots,(0,u_a),(0,u_a))$ if $a \notin J$. As in~\cite{boland2013unit_time}, let $\mathcal L$ and $\mathcal W$ denote the sets of leaves and internal nodes of the  SP-tree, and let $\mathcal W_i$ ($i=0,\ldots,d$) be the set of internal nodes at distance $i$ from the root. The lists of job-capacity vectors are computed as described in Algorithm~\ref{alg:sp_networks_K}.
\begin{algorithm}[htb]
  \caption{Maximizing total throughput for series-parallel networks under uniform maintenance limit $K$}\label{alg:sp_networks_K}
  \begin{tabbing}
    .....\=.....\=.....\=.....\=.....\=................... \kill \\
\textbf{for} $v\in\mathcal L$ \textbf{do} \\
\> Let $a\in A$ be the arc corresponding to $v$\\
\> \textbf{if} $a\in J$ \textbf{then} $L_v\leftarrow[((0,u_a),(0,u_a),\ldots,(0,u_a),(1,0))]$\\ 
\> \textbf{else} $L_v\leftarrow[((0,u_a),(0,u_a),\ldots,(0,u_a),(0,u_a))]$\\
\textbf{for} $i=d,d-1,\ldots,0$ \textbf{do}\\
\> \textbf{for} $v\in\mathcal W_i$ \textbf{do}\\
\> \> $L_v\leftarrow[]$\qquad\{initialize empty list\}\\
\> \> Let $u$ and $w$ be the child nodes of $v$\\
\> \> \textbf{for} $(z,z')\in L_u\times L_w$ and $\pi$ permutation of $\{1,2\ldots,T\}$ \textbf{do}\\
\> \> \> \textbf{for} $i\in[T]$ \textbf{do} $j''_{i}=j_{i}+j'_{\pi(i)}$\\
\> \> \> \textbf{if} $j''_{i}\leqslant K$ for all $i\in[T]$ \textbf{then}\\
\> \> \> \> \textbf{if} $v$ is a parallel composition node \textbf{then}\\
\> \> \> \> \> \textbf{for} $i\in[T]$ \textbf{do} $z''_{i}=z_{i}+z'_{\pi(i)}$\\
\> \> \> \> \textbf{else}\\
\> \> \> \> \> \textbf{for} $i\in[T]$ \textbf{do} $z''_{i}=\min\{z_{i},\,z'_{\pi(i)}\}$\\
\> \> \> sort $z''$ to get the corresponding canonical vector\\
\> \> \> \textbf{if} $z''\not\in L_v$ \textbf{then} add $z''$ to $L_v$\\
Let $v$ be the root node\\
return $\max\limits_{z\in L_v}\sum\limits_{i=1}^Tz_{i}$
  \end{tabbing}
\end{algorithm}
\begin{example} 
Consider the series-parallel graph in Figure~\ref{fig:example} where arc labels indicate capacities, all arcs need maintenance for a period over a time horizon of 2 periods. Suppose that $K=3$. In Figure~\ref{fig:algorithm_run_k}, we show how job-capacity vectors are computed in the SP-tree.
 \end{example}
\begin{figure}[htb]
  \begin{minipage}[b]{.45\linewidth}
\includegraphics[width=\textwidth]{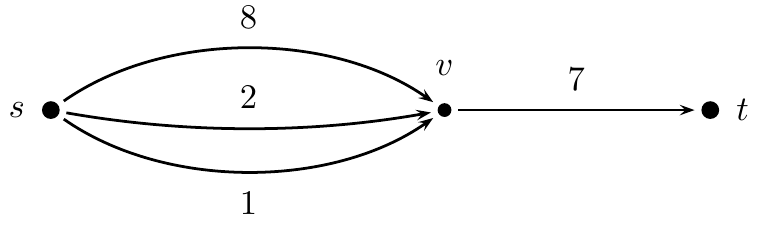}
\vspace{1cm}
\caption{Example network.}\label{fig:example}    
  \end{minipage}\hfill
  \begin{minipage}[b]{.53\linewidth}
\includegraphics[width=\textwidth]{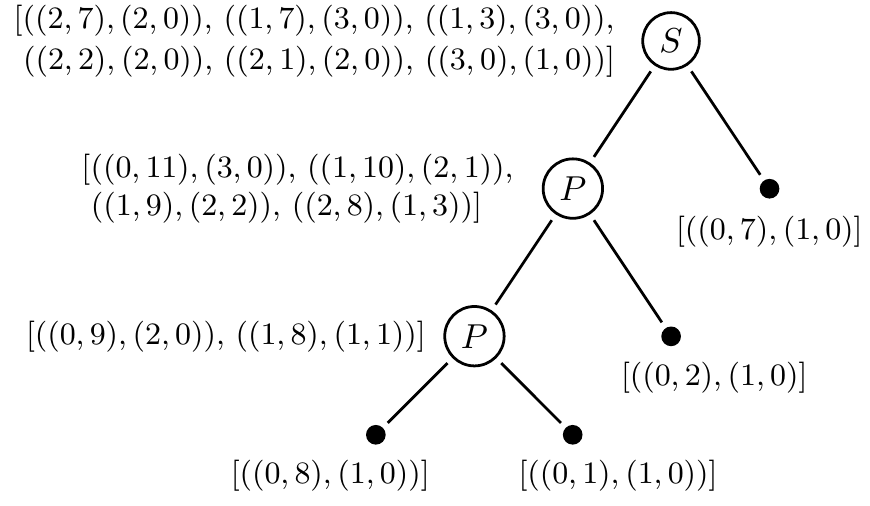}
  \caption{Computation of job-capacity vectors.}
  \label{fig:algorithm_run_k}
\end{minipage}
\end{figure}
\begin{proposition}\label{prop:runtime}
Let $m$ be the number of arcs, B be an upper bound for the capacities and K be the limit on the number of arcs that can be shut in a period. For series-parallel networks \textsc{MFASS} can be solved in time $O(T\log T(KmB)^{2T}T! m)$.
\end{proposition}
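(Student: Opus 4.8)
The plan is to bound the running time of Algorithm~\ref{alg:sp_networks_K} by multiplying the number of nodes of the SP-tree by the work performed at each internal node, after first bounding the size of the lists $L_v$.

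First I would bound $\lvert L_v\rvert$ for every node $v$. Each entry of $L_v$ is a job-capacity vector $((j_1,z_1),\ldots,(j_T,z_T))$ in standard form. Since at most $K$ arcs may be shut in any period, $j_i\in\{0,1,\ldots,K\}$, and since the flow through any subnetwork is bounded by the total capacity of the whole network, which is at most $mB$, we have $z_i\in\{0,1,\ldots,mB\}$. Hence each of the $T$ coordinate pairs $(j_i,z_i)$ ranges over at most $(K+1)(mB+1)$ values, so the number of distinct job-capacity vectors---and in particular the number of those in standard form---is at most $((K+1)(mB+1))^T=O((KmB)^T)$. Therefore $\lvert L_v\rvert=O((KmB)^T)$ for every node $v$. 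Next I would count the nodes of the SP-tree: it has $m$ leaves, one per arc, and every internal node is a binary (series or parallel) composition, so there are $m-1$ internal nodes and $O(m)$ nodes in total.

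The main step is to analyse the work at a single internal node $v$ with children $u$ and $w$. The outer loop ranges over all pairs $(z,z')\in L_u\times L_w$, of which there are at most $\lvert L_u\rvert\,\lvert L_w\rvert=O((KmB)^{2T})$, and over all $T!$ permutations $\pi$ of $\{1,\ldots,T\}$. For each such (pair, permutation) the combined vector $z''$ (using coordinate-wise sum in the parallel case and minimum in the series case, together with $j''_i=j_i+j'_{\pi(i)}$) is computed in $O(T)$ time and then sorted into standard form in $O(T\log T)$ time. The membership test $z''\notin L_v$ and the subsequent insertion can each be performed in $O(T)$ time by keying a dictionary on the length-$T$ standard-form vector, so this cost is dominated by the sorting. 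Consequently the work at a single node is $O((KmB)^{2T}\,T!\,T\log T)$, and multiplying by the $O(m)$ nodes yields the claimed bound $O(T\log T\,(KmB)^{2T}\,T!\,m)$.

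The only subtle points---and the main things to get right---are the list-size bound (which rests on the two crude but correct ranges $j_i\leqslant K$ and $z_i\leqslant mB$) and the observation that membership queries can be made cheap enough, via hashing on the standard-form vector, that they do not inflate the asymptotics beyond the $O(T\log T)$ sorting cost. Correctness of the computed lists is not at issue in the running-time statement; it follows exactly as for the algorithm in~\cite{boland2013unit_time}, the job-count coordinates $j_i$ and the test $j''_i\leqslant K$ being the only additions, so I would treat it as established and concentrate the argument on the complexity count above.
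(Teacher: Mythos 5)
Your proposal is correct and follows essentially the same route as the paper's proof: bounding each list by $(KmB)^T$ via the ranges $j_i\leqslant K$ and $z_i\leqslant mB$, iterating over $T!\,(KmB)^{2T}$ (pair, permutation) combinations per internal node with the $O(T\log T)$ sort dominating thanks to hashed membership tests, and multiplying by the $m-1$ internal nodes. No gaps.
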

\begin{proof}
The first and second component of an entry of a vector in the list at an internal node are bounded by $K$ and $mB$ respectively, hence each entry can take $KmB$ possible values. Therefore every list can contain at most $(KmB)^T$ elements. Thus, the loop over $(z, z') \in L_u \times L_w$ and permutations $\pi$ is over at most $T!(KmB)^{2T}$ elements. If hash tables are used for the check of $z''\in L_v$ then the bound of $O(T\log T)$ for sorting $z''$ dominates the run-time of the loop. In total there are $m-1$ internal nodes, thus the run-time of the complete algorithm is $O(T\log T(KmB)^{2T}T! m)$.
\end{proof}

From Proposition~\ref{prop:runtime}, it follows that for fixed $T$ \textsc{MFASS} on series-parallel networks can be solved in $O(m^{2T+1}B^{2T}K^{2T})$ time where $B$ is the maximum capacity of an arc in the network. Now we use a scaling approach to derive a fully polynomial approximation scheme (FPTAS), that is a family $(\mathcal A_\varepsilon)$ of algorithms, parameterized by a positive real number $\varepsilon$, such that algorithm $\mathcal A_\varepsilon$ produces a solution with objective value at least $(1-\varepsilon)z^*$, where $z^*$ is the optimal value, and the run-time of algorithm $\mathcal A_\varepsilon$ is polynomially bounded in the input size and $1/\varepsilon$. 

Our approximation scheme is based on scaling the problem such that the maximum capacity becomes bounded. In order to ensure that the solution of the scaled problem is sufficiently close to the optimum we need a lower bound for the optimal objective value. If $\lvert J\rvert\leqslant K(T-1)$ there is a feasible solution having one time period without any outage, and the flow value for such a time period will be sufficient as lower bound for our purpose. For $\lvert J\rvert>K(T-1)$ the situation is more complicated, and we need a preprocessing step to transform a given instance into an equivalent one with some control on the maximum capacity. Let $\rho=\max\{0,\lvert J\rvert-K(T-1)\}\in\{0,1,\ldots,K\}$, and let $M$ be the maximum flow value with $\rho$ arcs closed. For $\rho=0$, $M$ is the capacity of a minimum cut and can be computed by solving a max flow problem. For $\rho>0$, the computation of $M$ is described in Algorithm~\ref{alg:get_M}. 
\begin{algorithm}[htb]
  \caption{Computing the maximum flow $M$ with $\rho$ outages}\label{alg:get_M}
  \begin{tabbing}
    .....\=.....\=.....\=.....\=.....\=................... \kill \\
\textbf{for} $v\in\mathcal L$ \textbf{do} \\
\> Let $a\in A$ be the arc corresponding to $v$\\
\> $z_v^0\leftarrow u_a$\\
\> \textbf{if} $a\in J$ \textbf{then} $z_v^1\leftarrow 0$ \textbf{else} $z_v^1\leftarrow-\infty$\\
\> \textbf{for} $j=2,\ldots,\rho$ \textbf{do} $z_v^j\leftarrow -\infty$\\
\textbf{for} $i=d,d-1,\ldots,0$ \textbf{do}\\
\> \textbf{for} $v\in\mathcal W_i$ \textbf{do}\\
\> \> \textbf{for} $j=0,1,\ldots,\rho$ \textbf{do} $z_v^j\leftarrow -\infty$\\
\> \> Let $u$ and $w$ be the child nodes of $v$\\
\> \> \textbf{for} $j=0,1,\ldots,\rho$ \textbf{do}\\
\> \> \> \textbf{for} $j'=0,1,\ldots,\rho-j$ \textbf{do}\\
\> \> \> \> \textbf{if} $v$ is a parallel composition node \textbf{then}\\
\> \> \> \> \> $z_v^{j+j'}\leftarrow\max\{z_v^{j+j'},z_u^j+z_w^{j'}\}$\\
\> \> \> \> \textbf{else}\qquad \{$v$ is a series composition node\}\\
\> \> \> \> \> $z_v^{j+j'}\leftarrow\max\{z_v^{j+j'},\min\{z_u^j,z_w^{j'}\}\}$\\
Let $v$ be the root node\\
return $M=z_v^\rho$
\end{tabbing}
\end{algorithm}
Here, for a node $v$ in the SP-tree and a number $j\in\{0,1,\ldots,\rho\}$, $z_v^j$ is the capacity of the subnetwork corresponding to node $v$ when $j$ arcs in the intersection of $J$ and this subnetwork are closed. If $j$ is larger than the size of this intersection, we put $z_v^j=-\infty$. Algorithm~\ref{alg:get_M} shows that $M$ can be computed efficiently.
\begin{lemma}\label{lem:efficient_M}
The maximum flow value $M$ subject to the constraint that $\rho$ arcs from $J$ carry zero flow can be determined in time $O(mK^2)=O(m^3)$.  \qed
\end{lemma}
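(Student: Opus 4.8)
The plan is to prove the lemma in two stages: first show that Algorithm~\ref{alg:get_M} correctly computes $M$, and then bound its running time. For a node $v$ of the SP-tree write $N_v$ for the corresponding subnetwork and $J_v\subseteq J$ for the set of maintenance arcs it contains. I would prove by structural induction on the SP-tree the invariant that, once $v$ has been processed, $z_v^j$ equals the maximum $s_v$--$t_v$ flow value in $N_v$ taken over all choices of a $j$-element subset of $J_v$ to delete, with the convention $z_v^j=-\infty$ whenever $j>\lvert J_v\rvert$. Granting this for the root, the returned value $z_v^\rho$ is exactly $M$, the maximum flow with $\rho$ arcs from $J$ closed.

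For the base case, a leaf $v$ corresponds to a single arc $a$: deleting $0$ arcs leaves capacity $u_a$, deleting one is possible only if $a\in J$ (and then the capacity is $0$), and deleting two or more is infeasible; this matches the initialization. For the inductive step I would invoke the standard structural property of two-terminal series-parallel networks, that the maximum flow value of a parallel composition is the \emph{sum} of the maximum flows of its two parts, while that of a series composition is their \emph{minimum}, since every $s$--$t$ path must traverse both parts in sequence. Hence, to close $\ell$ arcs in $N_v$ optimally one splits them into $j$ closures in the left child and $j'=\ell-j$ in the right, obtaining $z_u^j+z_w^{j'}$ in the parallel case and $\min\{z_u^j,\,z_w^{j'}\}$ in the series case; optimizing over the split $\ell=j+j'$ is precisely the double loop of the algorithm. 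The $-\infty$ bookkeeping is consistent with this recursion, since $-\infty$ propagates through sum and $\min$ and is discarded by the outer $\max$ whenever a finite split exists, so $z_v^\ell=-\infty$ exactly when $\ell>\lvert J_u\rvert+\lvert J_w\rvert=\lvert J_v\rvert$.

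For the running time, note first that $\rho\leqslant K$: feasibility forces $\lvert J\rvert\leqslant KT$, so when $\rho>0$ we have $\rho=\lvert J\rvert-K(T-1)\leqslant KT-K(T-1)=K$, and moreover $\rho\leqslant\lvert J\rvert\leqslant m$. The SP-tree has $m$ leaves and $m-1$ internal nodes. Initializing each leaf costs $O(\rho)$, and at each internal node the double loop over $0\leqslant j\leqslant\rho$ and $0\leqslant j'\leqslant\rho-j$ performs $O(\rho^2)$ constant-time updates. Summing over all $O(m)$ nodes gives $O(m\rho^2)=O(mK^2)$, and since $K\leqslant m$ (a job limit exceeding $\lvert J\rvert\leqslant m$ is vacuous) this is also $O(m^3)$.

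The main obstacle, and the only step requiring real care, is the inductive step: one must verify that choosing the arcs to delete independently in the two children and then combining via the additive/bottleneck rule genuinely realizes the global optimum, i.e.\ that an optimal deletion of $\ell$ arcs in $N_v$ never needs to couple the two parts beyond the total count $\ell=j+j'$. This is exactly where the max-flow decomposition for series-parallel networks is used, and it is also what justifies keeping only the one-dimensional table $(z_v^0,\ldots,z_v^\rho)$ at each node.
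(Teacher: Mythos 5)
Your proof is correct and follows essentially the same route as the paper, which states the lemma with no explicit proof because it regards it as immediate from Algorithm~\ref{alg:get_M} and the stated meaning of the quantities $z_v^j$. You have simply made explicit the structural induction over the SP-tree (using that max flow adds under parallel composition and takes the minimum under series composition, with independent splits $\ell=j+j'$) and the $O(m\rho^2)$ count with $\rho\leqslant K\leqslant m$, which is exactly the justification the paper leaves to the reader.
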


No arc can carry more than $M$ units of flow in any time period, hence we may assume w.l.o.g. that $B\leqslant M$. We also know that the optimal objective value is at least $M$ because, we can schedule $\rho$ jobs allowing a flow of value $M$ in time period 1, and then continue arbitrarily. Let $L=\max\{1,\varepsilon B/(mT)\}$ and consider the scaled problem with the capacities $u_a$ replaced by $u'_a=\lfloor u_a/L\rfloor$. The scaled instance can be solved in time
\[O(m^{2T+1}(B/L)^{2T}K^{2T})=O(m^{4T+1}K^{2T} /\varepsilon^{2T}).\]
For any feasible vector $y=(y_{ai})_{a\in A,i\in[T]}\in\{0,1\}^{\lvert J\rvert\, T}$, let $F(y)$ and $F'(y)$ denote the objective values for the problem on the original network  and for the scaled version, respectively. Let $y^*=(y^*_{ai})_{a\in A,i\in[T]}$ and $\tilde y=(\tilde y_{ai})_{a\in A,i\in[T]}$ denote optimal solutions of the problem on the original network and of the scaled version, respectively. In the following lemma, we study the the behaviour of the objective values for these solutions under the scaling.
\begin{lemma}\label{lem:bounds} We have the following estimates:
  \begin{align}
  L\cdot F'(y^*)&\geqslant (1-\varepsilon)F(y^*),\label{eq:bound_1}\\
  F(\tilde y)&\geqslant L\cdot F'(\tilde y).\label{eq:bound_2}
  \end{align}
\end{lemma}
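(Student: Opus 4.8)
The plan is to establish the two inequalities separately, since each isolates one effect of the scaling. The scaling replaces each capacity $u_a$ by $u'_a=\lfloor u_a/L\rfloor$, so the flow through any individual arc scales by roughly a factor of $L$, up to a rounding loss of at most one unit per arc. The objective value $F(y)$ is a sum of $T$ maximum-flow values, one per time period. In any single time period the flow is bounded by a cut, and a cut in a series-parallel network involves at most $m$ arcs; summed over $T$ periods the total rounding discrepancy between $F(y)$ and $L\cdot F'(y)$ is at most $mTL$. This quantity $mTL$ is exactly what the choice $L=\max\{1,\varepsilon B/(mT)\}$ is engineered to control, and it will be the crux of the argument.

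\textbf{The easy direction.} First I would prove~\eqref{eq:bound_2}. The key observation is that for each arc, $Lu'_a=L\lfloor u_a/L\rfloor\leqslant u_a$, so every arc's scaled-and-rescaled capacity is bounded above by its original capacity. Hence any schedule together with any feasible flow for the scaled problem, when its flow is multiplied by $L$, remains feasible for the original problem: capacity constraints are preserved period by period, and flow conservation is linear and so survives multiplication by the constant $L$. Applying this to the optimal scaled solution $\tilde y$ gives $F(\tilde y)\geqslant L\cdot F'(\tilde y)$, since the original optimum under the schedule $\tilde y$ is at least the value of the specific rescaled flow we constructed.

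\textbf{The harder direction.} For~\eqref{eq:bound_1} I would argue in the opposite direction: starting from an optimal flow for the original problem under the schedule $y^*$, I would scale its arc flows down by $L$ and round down, producing a feasible flow for the scaled problem under the same schedule. This shows $F'(y^*)\geqslant\tfrac1L F(y^*)-(\text{rounding loss})$, where the loss per time period is at most the number of arcs crossing a minimum cut in that period times $L$, hence at most $mL$ per period and $mTL$ in total. Using $L\leqslant\varepsilon B/(mT)$ (or more carefully, $mTL\leqslant\varepsilon B$ in the relevant regime) bounds the total loss by $\varepsilon B$, and since $F(y^*)\geqslant B$ because the optimum is at least $M\geqslant B$, this loss is at most $\varepsilon F(y^*)$, which rearranges to $L\cdot F'(y^*)\geqslant(1-\varepsilon)F(y^*)$.

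\textbf{The main obstacle.} The delicate point is the bookkeeping for the rounding loss in~\eqref{eq:bound_1}: one must verify that replacing a maximum flow by its floor still yields a \emph{feasible} flow (flow conservation may be disturbed by independent rounding at each arc). The clean fix is to round down the flow along a decomposition into source-to-sink paths rather than arc by arc, so that conservation is automatic; the loss is then at most one unit per path times $L$, and the number of paths in the decomposition is bounded by the number of arcs leaving the source, hence by $m$. I would also need to confirm the lower bound $F(y^*)\geqslant M\geqslant B$ invoked above, which is exactly the feasibility argument (scheduling $\rho$ jobs to admit a flow of value $M$ in one period) noted just before the lemma, together with the normalization $B\leqslant M$.
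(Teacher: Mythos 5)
Your proof is correct but proceeds on the primal side, whereas the paper argues entirely on the dual side. For both inequalities the paper writes each per-period maximum flow as the capacity of a minimum cut (in the scaled network for~\eqref{eq:bound_1}, in the original network for~\eqref{eq:bound_2}) and then compares $u'_a=\lfloor u_a/L\rfloor$ with $u_a/L$ arc by arc on that cut: since a cut has at most $m$ arcs, the total discrepancy is at most $LmT=\varepsilon B\leqslant\varepsilon F(y^*)$, and no feasibility question ever arises because one only ever evaluates a fixed cut under two capacity functions. Your route instead rescales feasible \emph{flows}, which forces you to confront exactly the obstacle you identify for~\eqref{eq:bound_1} — independent floor-rounding of arc flows breaks conservation — and your fix via path decomposition does work: rounding the path values down preserves conservation, respects the scaled capacities (since $\sum_j\lfloor f_j/L\rfloor\leqslant\lfloor\sum_j f_j/L\rfloor\leqslant u'_a$), and loses at most one scaled unit per path. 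One small correction: the number of paths in the decomposition is not bounded by the out-degree of the source (several paths may share an arc leaving $s$); the right justification is the standard flow decomposition theorem, which yields at most $m$ paths because each extraction zeroes out at least one arc. With that repair your bound of $mL$ loss per period, hence $mTL=\varepsilon B\leqslant\varepsilon F(y^*)$ in total, matches the paper's. The cut-based argument buys brevity and immunity to feasibility bookkeeping; your flow-based argument is more constructive, in that it exhibits an explicit near-optimal feasible flow for the scaled instance rather than only certifying its value.
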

\begin{proof}
Both inequalities are obvious for $L=1$, because in this case the original and the scaled problem coincide. So we assume $L>1$. For $i=1,\ldots,T$ let $C_i$ be a minimum cut in the network $(V,A^*_i,s,t,u')$ where $A^*_i=\{a\in A\,:\,y^*_{ai}=1\}$.
Then, using $B\leqslant M\leqslant F(y^*)$, we obtain
\begin{multline*}
  L\cdot F'(y^*)=L\sum_{i=1}^T\sum_{a\in C_i}u'_{a}\geqslant L\sum_{i=1}^T\left(\sum_{a\in C_i}\frac{u_{a}}{L}-\lvert C_i\rvert\right)\geqslant \sum_{i=1}^T\sum_{a\in C_i}u_{a}-LmT\\
=\sum_{i=1}^T\sum_{a\in C_i}u_{a}-\varepsilon B\geqslant(1-\varepsilon)F(y^*).
\end{multline*}
Similarly, let $C'_i$ be a minimum cut in the network $(V,\tilde A_i,s,t,u)$ where $\tilde A_i=\{a\in A\,:\,\tilde y_{ai}=1\}$.
Then
\[F(\tilde y) = \sum_{i=1}^T\sum_{a\in C'_i}u_{ai}\geqslant L\sum_{i=1}^T\sum_{a\in C'_i}u'_{ai}\geqslant LF'(\tilde y).\qedhere\]
\end{proof}

\begin{proposition}\label{prop:fptas}
For fixed $T$, the class $\mathcal C_{\text{sp}}$ of instances with a series-parallel network has an FPTAS with run-time $O(m^{2T+1}(B/L)^{2T}K^{2T})=O(m^{4T+1}K^{2T} /\varepsilon^{2T})=O(m^{6T+1}/\varepsilon^{2T})$.
\end{proposition}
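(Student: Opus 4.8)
The plan is to assemble the FPTAS from the three ingredients already established in the excerpt: the pseudopolynomial exact algorithm of Proposition~\ref{prop:runtime} (which for fixed $T$ runs in $O(m^{2T+1}B^{2T}K^{2T})$ time), the lower bound $M$ on the optimal value together with the reduction $B\leqslant M$ furnished by Lemma~\ref{lem:efficient_M}, and the scaling estimates of Lemma~\ref{lem:bounds}. The algorithm $\mathcal A_\varepsilon$ is the following: first compute $M$ in time $O(m^3)$ via Algorithm~\ref{alg:get_M} and discard any capacity exceeding $M$ so that $B\leqslant M$; then set $L=\max\{1,\varepsilon B/(mT)\}$, form the scaled capacities $u'_a=\lfloor u_a/L\rfloor$, run the exact algorithm of Proposition~\ref{prop:runtime} on the scaled instance to obtain an optimal scaled schedule $\tilde y$, and return $\tilde y$ (evaluated on the original capacities) as the approximate solution.

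First I would verify the approximation guarantee. Chaining the two inequalities of Lemma~\ref{lem:bounds} with the optimality of $\tilde y$ for the scaled problem (so that $F'(\tilde y)\geqslant F'(y^*)$) gives
\[
F(\tilde y)\;\geqslant\;L\cdot F'(\tilde y)\;\geqslant\;L\cdot F'(y^*)\;\geqslant\;(1-\varepsilon)F(y^*),
\]
which is exactly the required $(1-\varepsilon)$-approximation, since $F(y^*)=z^*$ is the true optimum. Next I would confirm the run-time. Substituting $L=\varepsilon B/(mT)$ (the case $L=1$ being even cheaper, as the scaled and original problems coincide) yields $B/L=mT/\varepsilon$, so the scaled maximum capacity is $O(mT/\varepsilon)$, and feeding this into the bound of Proposition~\ref{prop:runtime} gives
\[
O\!\left(m^{2T+1}(B/L)^{2T}K^{2T}\right)=O\!\left(m^{2T+1}(mT/\varepsilon)^{2T}K^{2T}\right)=O\!\left(m^{4T+1}K^{2T}/\varepsilon^{2T}\right),
\]
absorbing the constant $T^{2T}$ into the fixed-$T$ hidden constant. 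Since $\lvert J\rvert\leqslant mK$ forces $K\leqslant m$ whenever the instance is feasible with $J\subseteq A$ (more simply, $K$ may be capped at $m$ without loss), the factor $K^{2T}$ is $O(m^{2T})$, giving the final bound $O(m^{6T+1}/\varepsilon^{2T})$; this is polynomial in the input size and in $1/\varepsilon$ for fixed $T$, establishing the FPTAS.

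The main obstacle is not any single computation but the bookkeeping that justifies $B\leqslant M$ and the validity of discarding excess capacity: one must argue that capping every $u_a$ at $M$ changes neither the optimal value nor feasibility, because no arc can carry more than $M$ units in any period (as $M$ is already the maximum flow achievable with the minimum number $\rho$ of forced outages, and fewer outages in a given period only relax the cut). Once that reduction is in place, the lower bound $z^*\geqslant M\geqslant B$ is what makes the additive scaling error $\varepsilon B$ small relative to the optimum, and the rest is the routine substitution above. I would therefore devote the most care to stating cleanly why $B\leqslant M$ may be assumed and why $z^*\geqslant M$, since these two facts are precisely what convert the additive rounding loss of Lemma~\ref{lem:bounds} into a multiplicative $(1-\varepsilon)$ guarantee.
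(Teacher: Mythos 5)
Your proposal is correct and follows essentially the same route as the paper: scale by $L=\max\{1,\varepsilon B/(mT)\}$ after reducing to $B\leqslant M$, run the exact algorithm of Proposition~\ref{prop:runtime} on the scaled instance, and chain $F(\tilde y)\geqslant LF'(\tilde y)\geqslant LF'(y^*)\geqslant(1-\varepsilon)F(y^*)$ using Lemma~\ref{lem:bounds}; the paper's proof is just this one-line chain, with the preprocessing and the bounds $z^*\geqslant M\geqslant B$ established in the surrounding text exactly as you reconstruct them. The only cosmetic slip is the clause ``$\lvert J\rvert\leqslant mK$ forces $K\leqslant m$'' (it does not), but your parenthetical remark that $K$ may be capped at $\lvert J\rvert\leqslant m$ without loss is the correct justification for the final $O(m^{6T+1}/\varepsilon^{2T})$ bound.
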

\begin{proof}
The run-time bound for the scaled problem is a consequence of Proposition~\ref{prop:runtime}, and the approximation guarantee follows from~(\ref{eq:bound_1}) and~(\ref{eq:bound_2}): $F(\tilde y)\geqslant LF'(\tilde y)\geqslant LF'(y^*)\geqslant(1-\varepsilon)F(y^*)$.
\end{proof}

\begin{remark}
The problem can be generalized by allowing the bound on the number of jobs to vary over time. In other words, the parameter $K$ is replaced by a vector $(K_1,\ldots,K_T)$ and constraints~(\ref{eq:job_bound}) are replaced by 
\[\sum_{a\in J} y_{ai} \geqslant \lvert J\rvert - K_i\qquad\text{for all }i\in[T].\]
Algorithm~\ref{alg:sp_networks_K} can be modified to solve this more general problem, and with 
\[\rho=\max\left\{0,\,\lvert J\rvert-\sum_{i=1}^TK_i+\min_{i\in[T]}K_i\right\}\]
we obtain an FPTAS of runtime $O(m^{6T+1}/\varepsilon^{2T})$ for this problem.
\end{remark}

For $K=\lvert J\rvert$, it was shown in~\cite{boland2013unit_time} that the method corresponding to Algorithm~\ref{alg:sp_networks_K} runs in time $O(m^{2T-1}B^{2T-2})$, and using the same argument as above, we obtain the following approximation result. 
\begin{proposition}\label{prop:fptas2}
For fixed $T$, $K=\lvert J\rvert$ and series-parallel networks, \textsc{MFASS} has an FPTAS with run-time $O(m^{2T-1}(B/L)^{2T-2})=O(m^{4T-3} /\varepsilon^{2T-2})$. \qed
\end{proposition}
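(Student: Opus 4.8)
The plan is to re-run the scaling-and-rounding argument behind Proposition~\ref{prop:fptas} essentially verbatim, the only substantive change being that the exact pseudopolynomial subroutine is replaced by the faster one specialised to $K=\lvert J\rvert$. First I would invoke the result of~\cite{boland2013unit_time} that, when there is no cap on the number of jobs per period, the list-based dynamic program over the SP-tree solves \textsc{MFASS} exactly in time $O(m^{2T-1}B^{2T-2})$. The improvement over Proposition~\ref{prop:runtime} stems from the fact that with $K=\lvert J\rvert$ the constraint $\sum_{a\in J}y_{ai}\geqslant\lvert J\rvert-K$ is vacuous, so one never has to record the per-period job counts $j_i$: it suffices to propagate plain capacity vectors $(z_1,\ldots,z_T)$ rather than the job-capacity vectors of Algorithm~\ref{alg:sp_networks_K}. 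This removes the factor $K^{2T}$ entirely and trims the capacity exponent to $2T-2$.

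Next I would set up the scaling exactly as in Section~\ref{sec:fptas}. For $K=\lvert J\rvert$ and $T\geqslant 2$ one has $\rho=\max\{0,\lvert J\rvert-K(T-1)\}=0$, so $M$ is simply the maximum flow in the whole network, obtained from a single max-flow computation. Since no arc can carry more than $M$ units of flow in any period we may assume $B\leqslant M$, and scheduling all jobs in one time period leaves an outage-free period carrying flow $M$, whence $M\leqslant F(y^*)$. Thus the chain $B\leqslant M\leqslant F(y^*)$ that underpins Lemma~\ref{lem:bounds} holds in this setting as well. I would then put $L=\max\{1,\varepsilon B/(mT)\}$ and replace each capacity $u_a$ by $u'_a=\lfloor u_a/L\rfloor$, so that the scaled capacities are bounded by $B/L$.

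It remains to read off the two bounds. Running the $K=\lvert J\rvert$ subroutine on the scaled instance costs $O(m^{2T-1}(B/L)^{2T-2})$; in the nontrivial case $L>1$ we have $B/L=mT/\varepsilon$, so for fixed $T$ this is $O(m^{2T-1}\cdot(m/\varepsilon)^{2T-2})=O(m^{4T-3}/\varepsilon^{2T-2})$, as claimed. The accuracy estimate is then identical to the one in Proposition~\ref{prop:fptas}: since Lemma~\ref{lem:bounds} uses only the hypothesis $B\leqslant M\leqslant F(y^*)$, which we have just checked, it yields $F(\tilde y)\geqslant L\,F'(\tilde y)\geqslant L\,F'(y^*)\geqslant(1-\varepsilon)F(y^*)$.

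The one point genuinely requiring care — and essentially the only thing to verify beyond bookkeeping — is that dropping the job-count components is legitimate, i.e.\ that the reduced dynamic program of~\cite{boland2013unit_time} still computes the exact optimum both on the original and on the scaled instance. Once this is granted, everything else is a direct transcription of the proof of Proposition~\ref{prop:fptas}, with the exponents $(6T+1,\,2T)$ replaced by $(4T-3,\,2T-2)$.
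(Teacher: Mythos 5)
Your proposal is correct and matches the paper's (essentially unstated) argument: the paper simply cites the $O(m^{2T-1}B^{2T-2})$ exact algorithm from~\cite{boland2013unit_time} for $K=\lvert J\rvert$ and remarks that ``the same argument as above'' (i.e.\ the scaling analysis of Lemma~\ref{lem:bounds} and Proposition~\ref{prop:fptas}) carries over. You have merely filled in the details the paper leaves implicit --- in particular the checks that $\rho=0$, that $B\leqslant M\leqslant F(y^*)$ still holds, and that the job-count bookkeeping can be dropped --- all of which are correct.
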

If $T$ is not fixed we still get a PTAS using the fact that for $K=\lvert J \rvert$ shutting all
arcs in the job set $J$ at the same time gives an approximation ratio of $(1-1/T)$. The basic idea
is that in order to get a $(1-\varepsilon)$-approximation for an instance with arbitrary $T$ we can
distinguish two cases: if $1/T\leqslant\varepsilon$ we schedule all jobs at time 1 and otherwise we
run the $(1-\varepsilon)$-approximation algorithm from Proposition~\ref{prop:fptas2}.  
\begin{corollary}\label{cor:ptas}
For $K=\lvert J\rvert$ and series-parallel networks, \textsc{MFASS} has a PTAS with run-time
\[O\left(f(1/\varepsilon)m^{4/\varepsilon-3}\right)\]
where $\displaystyle f(x)=x^{5x-5/2}e^{x}\log x$.
\end{corollary}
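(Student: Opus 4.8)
The plan is to follow the two-case strategy indicated just before the statement; the only genuine work is to make the $T$-dependence of the running time in the second case explicit and to bound it with Stirling's formula.

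First I would dispose of the case $1/T\le\varepsilon$, i.e. $T\ge 1/\varepsilon$. Since $K=\lvert J\rvert$, all jobs may be scheduled in period $1$, so periods $2,\dots,T$ each carry the unrestricted maximum flow $F_0$ and this schedule has value at least $(T-1)F_0$. As $z^*\le TF_0$ (no period carries more than $F_0$), the schedule attains value at least $(1-1/T)z^*\ge(1-\varepsilon)z^*$, and it is produced by one maximum-flow computation, whose cost is dominated by the second case. In the complementary case $1/T>\varepsilon$ (so $T<1/\varepsilon$) I would run the FPTAS of Proposition~\ref{prop:fptas2}, which already returns a $(1-\varepsilon)$-approximation; the task is to re-express its run-time with the $T$-dependent factors — absorbed into $O(\cdot)$ when $T$ is fixed — written out.

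Tracking the permutation loop and the sorting step of Algorithm~\ref{alg:sp_networks_K} exactly as in the proof of Proposition~\ref{prop:runtime}, the $K=\lvert J\rvert$ run-time of Proposition~\ref{prop:fptas2} becomes, with explicit prefactor,
\[O\!\left(T\log T\cdot T!\cdot m^{2T-1}(B/L)^{2T-2}\right).\]
Using $B/L\le mT/\varepsilon$ (which holds in both branches of the definition of $L$) this is
\[O\!\left(T\log T\cdot T!\cdot T^{2T-2}\,\varepsilon^{-(2T-2)}\,m^{4T-3}\right).\]
Now I would invoke $T<1/\varepsilon$ and replace every occurrence of $T$ by $1/\varepsilon$, which is legitimate since each factor is nondecreasing in $T$ and $m\ge1$ gives $m^{4T-3}\le m^{4/\varepsilon-3}$. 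Writing $x=1/\varepsilon$ and applying the Stirling bound $x!\le x^{x+1/2}e^{x}$, the powers of $x$ add up as $1+(x+\tfrac12)+(2x-2)+(2x-2)=5x-\tfrac52$, the factorial supplies the $e^{x}$, and the sort supplies the $\log x$, so the prefactor collapses to $f(x)=x^{5x-5/2}e^{x}\log x$. Combining the two cases yields the claimed bound $O\!\left(f(1/\varepsilon)\,m^{4/\varepsilon-3}\right)$.

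I expect the main obstacle to be purely bookkeeping: getting the exponent arithmetic right and choosing a form of Stirling's inequality that folds both the factorial and the stray constant $e$ cleanly into $f$. The appearance of $e^{x}$ rather than the sharp $e^{-x}$ is exactly this loosening, valid because $x!\le x^{x+1/2}e^{1-x}\le x^{x+1/2}e^{x}$ for $x\ge1$.
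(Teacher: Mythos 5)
Your proof is correct and follows essentially the same route as the paper: the same two-case split on whether $1/T\leqslant\varepsilon$, the same substitution of $1/\varepsilon$ for $T$ in the explicit run-time of the fixed-$T$ FPTAS, and the same Stirling estimate producing the exponent $5x-5/2$, the factor $e^{x}$, and the factor $\log x$. The only (harmless) difference is in the bookkeeping of the exponential factor: the paper keeps the $+1$ in the base $(mB/L+1)$, bounds the resulting term by $O(e^{2/\varepsilon})$, and then uses the sharp Stirling factor $e^{-1/\varepsilon}$, whereas you drop the $+1$ via $B/L\leqslant mT/\varepsilon$ and absorb the slack into the deliberately loose bound $x!\leqslant x^{x+1/2}e^{x}$; both land on the same $f$.
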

\begin{proof}
Let $\varepsilon>0$ be fixed. If $1/T\leqslant\varepsilon$ we schedule all jobs at time 1. Otherwise $T< 1/\varepsilon$ and we run the $(1-\varepsilon)$-approximation algorithm for $T$. By Proposition~7 in~\cite{boland2013unit_time}, the run-time is bounded by
\begin{multline}\label{eq:runtime_bound}
O\left(T\log(T)T!(mB/L+1)^{2(T-1)}m\right)
=O\left(T\log(T)T!\left(\frac{m^2T}{\varepsilon}+1\right)^{2(T-1)}m\right)\\
=O\left((mT)^{4T-3}\log(T)T!\left(\frac{1}{\varepsilon T}+\frac{1}{(mT)^2}\right)^{2(T-1)}\right).
\end{multline}
We have
\[
\left(\frac{1}{\varepsilon T}+\frac{1}{(mT)^2}\right)^{2(T-1)}
=\left(\frac{m^2T+\varepsilon}{\varepsilon(mT)^2}\right)^{2(T-1)}
=\left(1+\frac{m^2T+\varepsilon-\varepsilon(mT)^2}{\varepsilon(mT)^2}\right)^{2(T-1)}
\]
With $\alpha=m^2T+\varepsilon-\varepsilon(mT)^2$ and $\beta=\varepsilon(mT)^2$ we obtain
\[\left(\frac{1}{\varepsilon T}+\frac{1}{(mT)^2}\right)^{2(T-1)}
= \left[\left(1+\frac{\alpha}{\beta}\right)^{\beta/\alpha}\right]^{2(T-1)\alpha/\beta}\leqslant e^{2(T-1)\alpha/\beta}.\]
Now
\[2(T-1)\frac\alpha\beta
\leqslant 2T\cdot\frac{m^2T+\varepsilon-\varepsilon(mT)^2}{\varepsilon(mT)^2}
=2\cdot\frac{m^2T(1-\varepsilon T)+\varepsilon}{\varepsilon m^2T}
\leqslant 2/\varepsilon+1,\]
and this implies
\[\left(\frac{1}{\varepsilon T}+\frac{1}{(mT)^2}\right)^{2(T-1)} = O(e^{2/\varepsilon}).\]
Substituting into~(\ref{eq:runtime_bound}) yields a run-time bound of 
\[O\left((mT)^{4T-3}\log(T)T!e^{2/\varepsilon}\right),\]
and since all terms are increasing in $T$, we get with $T<1/\varepsilon$ and using Stirling's formula to bound the factorial, that the run-time is bounded by
\begin{multline*}
O\left((1/\varepsilon)^{4/\varepsilon-3}\log(1/\varepsilon)\lceil 1/\varepsilon\rceil!e^{2/\varepsilon}m^{4/\varepsilon-3}\right)
=O\left((1/\varepsilon)^{4/\varepsilon-3}\log(1/\varepsilon)(1/\varepsilon)^{1/\varepsilon}e^{-1/\varepsilon}\sqrt{1/\varepsilon}e^{2/\varepsilon}m^{4/\varepsilon-3}\right) \\
=O\left((1/\varepsilon)^{5/\varepsilon-5/2}\log(1/\varepsilon)e^{1/\varepsilon}m^{4/\varepsilon-3}\right).\qedhere
\end{multline*}
\end{proof}

\subsubsection*{Acknowledgment}
We would like to thank two anonymous referees for valuable comments that significantly improved the
presentation of our results, in particular the proof of Proposition~\ref{prop:K_2_single_node}.



\end{document}